 \providecommand{\abs}[1]{\lvert#1\rvert}
 \providecommand{\R}{\mathbb{R}}
 \providecommand{\R}{\mathbb{R}}
 \providecommand{\E}{\mathbb{E}}
\begin{document}
\title{Confidence and Organizations\thanks{I am grateful to David Besanko, Luis Rayo, Daniel Barron, and Wojciech Olszewski for their guidance and encouragement. I would also like to thank Sarah Auster, Henrique Castro-Pires, Francesc Dilme, Xavier Durán, Théo Durandard, Alkis Georgiadis-Harris, Benjamin Golub, Botond K\H{o}szegi, David Laibson, Ulrike Malmendier, Julien Manili, Andrei Matveenko, Edwin Mu\~{n}oz-Rodr\'{i}guez, Michael Powell, Rosina Rodr\'{i}guez-Olivera, Bruno Strulovici, Alvaro Sandroni, Jeroen Swinkels, Boli Xu, Gabriel Ziegler, as well as participants at various seminars and conferences for their insightful comments and suggestions. Support by the Deutsche Forschungsgemeinschaft (German Research Foundation) through grant CRC TR 224 (Project B02) is gratefully acknowledged.}}
\author{Andrés Espitia\thanks{Kellogg School of Management - Northwestern University. Email: \href{mailto:andres.espitia@kellogg.northwestern.edu}{andres.espitia@kellogg.northwestern.edu}.}}
\maketitle
\newtheorem{assumption}{Assumption}
\newtheorem{lemma}{Lemma}
\newtheorem{claim}{Claim}
\newtheorem{proposition}{Proposition}
\newtheorem{corollary}{Corollary}
\newtheorem{definition}{Definition}
\vspace{-3em}

\begin{abstract}
	Miscalibrated beliefs are widely viewed as compromising the quality of employees’ decisions. Why, then, might an organization prefer to hire an individual known to be overconfident? This paper develops a theory of organizational demand for employees' levels of confidence when private information interacts with conflicts of interest. I study a model in which an employee uses private information to make decisions on behalf of the organization and analyze the \textit{belief design problem}, namely, how the organization would like the employee to interpret his observations. I show that organizations prefer employees whose actions reflect a constant expected conflict of interest across observations. A well-calibrated employee is optimal if and only if private information does not affect this conflict. When the conflict varies with information, organizations optimally select employees whose confidence distorts their responses to information. Overconfidence is optimal when the organization seeks stronger adjustments to information than a well-calibrated employee would provide.
\end{abstract}

JEL Codes: D82, D23, D84

Keywords: Delegation; Private Information; Overconfidence; Overprecision; Organizational Design

\newpage

\section{Introduction}

There is extensive evidence about the pervasiveness of overconfidence in organizations \citep{malmendier2015verges}.  CEOs, executives, and managers \citep{malmendier2005ceo,ben2013managerial,adam2015managerial, barrero2021micro}; traders and investors \citep{daniel2015overconfident}; lawyers \citep{goodman2010insightful}; medical doctors \citep{berner2008overconfidence,croskerry2008overconfidence}; and entrepreneurs \citep{koellinger2007think} have all been found to exhibit some degree of overconfidence.\footnote{Throughout the paper, I use “overconfidence” as a generic label. I focus on overprecision, whose informal meaning is discussed in the Introduction and whose formalization is provided in \autoref{sec:appsanddefs}.} This behavioral bias is regularly considered as compromising the quality of employees’ decisions and as harming organizations’ performance.\footnote{In the words of Nobel laureate Daniel Kahneman, “an unbiased appreciation of uncertainty is a cornerstone of rationality – but it is not what people and organizations want” \citep[see][chap.~24]{kahneman2011thinking}. Furthermore, “Kahneman recently told an interviewer that if he had a magic wand that could eliminate one human bias, he would do away with overconfidence” \citep[pg.~1]{malmendier2015verges}.} Given the costs that organizations suffer due to overconfidence, why might employees \textit{known} to be overconfident be systematically hired and retained, even when their unbiased counterparts are available? Additionally, how would organizations use their knowledge about applicants’ confidence to select their employees optimally?

Motivated by this puzzle, I explore the role of biased beliefs as an instrument to alleviate \textit{agency frictions}, those arising from misaligned preferences among members of an organization. I focus on situations in which an employee is entrusted with a decision and relies on private information to act on behalf of the organization. In such settings, the key friction is often not the lack of information but the responsiveness of the employee’s judgment to the information he observes.

More broadly, in many environments, different members of the organization agree on basic empirical regularities but disagree on how informative particular observations are. Base rates of underlying conditions and the frequency with which observable signals occur are often public or can be learned quickly from experience. What is harder to validate is how strongly a given signal realization should shift beliefs about the underlying state.

Accordingly, I study environments in which agents agree on the distribution of states and on the frequency of signals, but may disagree on how to combine these two pieces of information. One can think of agents as sharing the same empirical “spreadsheets” for states and signals, yet differing in how they merge them into an inference rule. This disagreement captures differences in confidence or perceived informativeness, while holding other belief distortions, such as optimism about fundamentals, fixed.

This framework allows me to study a particular manifestation of overconfidence: \textit{overprecision}, an exaggerated faith in one’s information. Informally, overprecise individuals tend to regard themselves as more informed than is justified by reality.\footnote{Other forms of overconfidence include overestimation (thinking that one’s performance and abilities are above their actual level) and overplacement (erroneously thinking that one has outperformed others or that one’s abilities are above those of other individuals). See \citet{moore2008trouble} for a discussion on the connections and differences between these manifestations of overconfidence.} Intuitively, an overconfident employee is desirable when a well-calibrated counterpart would be too \textit{unresponsive} to new information. Overprecision leads to over-updating and to more extreme actions, which may benefit the organization when calibrated employees would otherwise remain too close to a reference action.

Moreover, different positions within organizations place different demands on how responsive decisions should be to information. Some roles require frequent adjustment to changing external conditions, while others emphasize consistency and adherence to established procedures. The analysis captures these differences in reduced form and studies how they translate into organizational demand for employees with different worldviews, or levels of confidence.

To examine the consequences of this mechanism, I develop a model in which an employee makes a decision that affects both himself and the organization. While there is disagreement about the right course of action, the employee has private access to relevant but unverifiable information. In this setting, I introduce a novel feature: the organization can choose how the employee interprets his observations. I refer to this decision as the \textit{belief design problem}. This paper is best understood as a theory of how workplace positions generate demand for particular employee worldviews, using belief design as an analytical tool.

Formally, belief design corresponds to an optimization problem in which the organization chooses the employee’s posterior beliefs after each signal realization, subject to the constraint that they both agree \textit{ex ante} on the distribution of the state and signals. In practice, belief design can be interpreted as a selection process in which the organization chooses among candidates with different worldviews. This differs from standard information design \citep[see][]{kamenica2019bayesian}, which focuses on the provision of information rather than on how information is interpreted.

The contributions of this paper are threefold. First, I characterize conditions under which the organization prefers a well-calibrated, underconfident, or overconfident employee. Second, I introduce belief design as a flexible, tractable methodology for studying belief-based biases. A key step is to interpret the beliefs optimally chosen by the organization. I propose a definition of overprecision based on the concordance stochastic order that applies to a general class of information structures, going beyond commonly assumed parametric restrictions such as the bivariate normal or truth-or-noise structures. The latter case is studied in \autoref{sec:truth}, where the same economic forces are shown to operate in a continuous-state environment. Third, I study how belief-based selection interacts with other organizational practices used to mitigate agency frictions, including action-contingent transfers and centralized decision-making.

A key object for describing the optimal employee’s characteristics is the difference between the actions preferred by each player in each state, which I refer to as the \textit{conflict of interest}. Misalignment in preferences generates an average bias in the employee's actions. Since there is \textit{ex ante} agreement on the marginal distributions of states and signals, belief-based selection allows this fixed average bias to be redistributed across signal realizations. Under the standard assumption of quadratic-loss preferences, the organization dislikes changes in the employee’s actions that are not justified by his private information. I show that the optimal employee aligns with the organization in terms of \textit{responsiveness} to information, in the sense that the bias in his actions does not change with signal realizations. As a result, a well-calibrated employee is preferred if and only if the expected conflict of interest is invariant across signal realizations.

In contrast, if the signal affects the expected conflict of interest, evenly distributing the employee’s bias requires that the optimal employee takes lower actions than his well-calibrated counterpart after signal realizations that induce high expected conflict of interest. The critical condition for the optimality of overconfidence is that the signal moves the conditional expectation of the conflict of interest and the employee’s preferred action in \textit{opposite} directions. If this is the case, the optimal agent takes more extreme actions than the well-calibrated one, which is a manifestation of overconfidence. Analogously, an underconfident employee is optimal if the expected conflict of interest and the employee’s preferred action move in the \textit{same} direction.

Optimality of overconfidence arises, for example, when the employee’s preferred action increases less than proportionally relative to the organization’s preferred action. This may occur because it is costly for the employee to adjust to the current conditions or because he is subject to some degree of \textit{status quo} bias.\footnote{It is sometimes assumed that adjustment costs are borne by the firm  \citep[see][]{barrero2021micro}. In the benchmark case with no state-dependent conflict of interest, including the special case of fully aligned preferences, Proposition 2 implies that belief distortions are detrimental for the firm. The potential benefits of biased beliefs arise only when conflicts vary across states.} Thus, the demand for overconfidence may arise as a strategy to mitigate the effects of other pervasive behavioral biases, \textit{status quo} bias in decision-making being a salient example.

I also explore the effects of alternative tools organizations may use to alleviate agency frictions. First, I study the interaction between belief design and action-contingent transfers.  I discuss conditions under which the use of transfers does not change the optimal beliefs. Moreover, in the optimal case, these tools serve different purposes: beliefs are used to spread the employee’s bias across signal realizations, while transfers are used to decrease his average bias. Interestingly, belief design leads to ``flatter" compensation contracts: all equilibrium actions taken by the optimal employee yield the same transfer. In contrast, when the employee is restricted to be well-calibrated, the optimal transfers vary with the actions he takes in equilibrium.

Finally, although modeled symmetrically in this paper, overconfidence and underconfidence differ starkly: there is a natural substitute for extreme underconfidence. Namely, the organization can retain decision rights and act without relying on the employee’s private information. Delegation is valuable only insofar as it allows the organization to use that information. When the optimal employee is insufficiently confident, the organization is therefore better off avoiding the conflict of interest altogether by centralizing decision-making.

\textbf{Related literature.}
The economic analysis of overconfidence has a long tradition.\footnote{As early as Adam Smith, it was noted that “the over-weening conceit which the greater part of men have of their own abilities, is an ancient evil remarked by the philosophers and moralists of all ages” \citep[chap.~X, book~I]{adamsmith}.} The term, however, encompasses distinct phenomena \citep{moore2008trouble}. This paper focuses on \textit{overprecision}, understood as an exaggerated faith in the informativeness of one’s signals, which is often regarded as the most robust and least well-understood form of overconfidence \citep{haran2010simple,moore2017three}. For a comprehensive review of overconfidence in labor markets and organizations, see \citet{santos2020overconfidence}.

A growing body of theoretical literature has examined settings in which overconfidence (broadly defined) may improve organizational outcomes. At a high level, these arguments fall into three categories. First, overconfidence may serve as a commitment device, enabling organizations or their members to pursue strategies that would otherwise lack credibility \citep{kyle1997speculation,rotemberg2000visionaries,van2005organizational,gervais2007positive,englmaier2011commitment,bolton2013leadership,englmaier2014biased,phua2018overconfident,ba2023multi}. Second, miscalibrated beliefs may facilitate the acquisition, revelation, or aggregation of private information \citep{bernardo2001evolution,vidal2007should,che2009opinions,van2010culture,levy2015correlation,hestermann2020experimentation,ilinov2022optimally,ostrizekvague}. Third, overconfidence may improve risk-taking, risk-sharing, or diversification within organizations \citep{goel2008overconfidence,santos2008positive,de2011overconfidence,gervais2011overconfidence,palomino2011overconfidence,heller2014overconfidence,lambertsen2025exploiting}.

This paper differs from these approaches by linking the optimality of overconfidence directly to the structure of preference misalignment within the organization. Rather than relying on specific strategic interactions or informational externalities, the analysis isolates how the interaction between private information and state-dependent conflicts of interest generates demand for particular belief distortions.

While much of the literature treats belief distortions as exogenous, there is increasing interest in models that endogenize belief formation. Work on persuasion and narratives, where agents influence how others map data into beliefs, is particularly relevant \citep{eliaz2020model,schwartzstein2021using,jain2023informing,aina2024tailored,ispano2023perils}. In the present context, the organization’s choice of how the employee interprets his private information can be viewed as a choice of narrative. Unlike most of this literature, belief design is constrained by \textit{ex ante} agreement on the marginal distributions of states and signals, and the analysis provides an interpretation of the optimal belief distortion in terms of overprecision.

Finally, I illustrate how belief-based selection can be used as an indirect source of incentives. Numerous tools for addressing conflicting preferences have been studied. One possibility is to exploit individual characteristics to improve outcomes \citep{prendergast1996impetuous,prendergast2007motivation,prendergast2008intrinsic}. I contribute to this literature by considering beliefs as part of those characteristics. The literature on information design \citep{rayo2010optimal,kamenica2011bayesian,kamenica2019bayesian} shares the same object of choice, but focuses on the information that employees actually observe rather than on how they interpret exogenously available information. Alternatively, the delegation literature \citep{delegation1977,delegation1984,alonso2008optimal} studies the use of rules on the set of available actions from which the employee can choose. I build on a similar framework, namely an organization formed by two individuals with different preferences over possible decisions and a mismatch between authority and information, and I abstract from moral hazard and contractual incentives in order to isolate how belief-based selection shapes delegated decision-making.

\section{Model}
\label{sec:model}

\qquad \textbf{Preliminaries. }
For any finite set $X\subset\R$,  I use $\Delta(X)$ to denote the set of probability mass functions over $X$.  Subscripts on operators explicitly specify the probability mass function used; e.g., $\mathbb{E}g$ indicates that the expectation is taken with respect to the distribution $g\in\Delta(X)$. For any joint distribution $g\in\Delta(X\times X’)$, let $g_X\in\Delta(X)$ and  $g_{X’}\in\Delta(X’)$ denote its marginals. Finally, all variables with a tilde are random variables.

\textbf{Players and actions. }
An agent (\textit{he}) makes a decision $x\in\R$ that also affects a principal (\textit{she}). Payoffs depend on a state of the world $\theta\in\Theta:=\{\theta_1,\ldots,\theta_n \}\subset \R$, with $n\geq 2$. States are labeled such that $\theta_1<\cdots<\theta_n$.  Ex-post payoffs are given by $-(x-\theta)^2$ for the principal and by $-(x-y(\theta))^2$ for the agent. That is, the state represents the principal’s preferred action. On the other hand, the agent’s preferred action is given by the \textit{bias function} $y:\Theta\to\R$. I assume that $y$ is strictly increasing. This provides some minimum degree of alignment in the players’ preferences. The difference between the players’ preferred actions in a given state is denoted by $c(\theta):=y(\theta)-\theta$ and I refer to it as the \textit{conflict of interest} in state $\theta$.

\textbf{Information. }
The agent has private and non-verifiable information about the state of the world. He observes a signal realization $s\in S:=\{s_1,\ldots,s_m\}$, with $m\geq 2$. From the point of view of the principal, states and signals are distributed according to  $f\in\Delta(\Theta\times S)$, which is an $n\times m$ matrix with $ij$-th entry equal to $f(\theta_i,s_j):=Pr_f[ \tilde{\theta}=\theta_i,\tilde{s}=s_j]$. I refer to $f$ as the \textit{true distribution} and assume it has full support, i.e.  $f(\theta,s)>0$ for all $(\theta,s)\in\Theta\times S$. Higher states are more likely after higher signal realizations. In particular, signals are labeled such that $s_j$ indexes the posterior mean $\E_{f}[\tilde{\theta}|s_j]$.

\textbf{Belief design.}
The key feature of the model is that the principal can choose how the agent interprets his private information. Specifically, she chooses a distribution $g\in\Delta(\Theta\times S)$ such that after observing a given signal realization, the agent computes his posterior beliefs according to $g$ (rather than according to $f$).  This step can be interpreted as a selection or hiring process in which the principal chooses an agent who already possesses the desired beliefs.

I impose two restrictions on the set of joint distributions the principal can select.  First, I assume that players \textit{ex ante} agree on the distribution of the state, i.e.,  $g_\Theta=f_\Theta$. Additionally, I require the agent to be well-calibrated about the frequency of signal realizations, i.e., $g_S=f_S$. These restrictions allow us to focus on the information the agent perceives after each signal realization. Let
\[
\mathcal{G}:=\{g\in\Delta(\Theta\times S): g_\Theta=f_\Theta, g_S=f_S\}
\]
denote the set of feasible choices for the principal.\footnote{From a probabilistic perspective, selecting a joint distribution with given marginals corresponds to selecting a copula. I use this terminology only as a mathematical reference; the economic interpretation throughout is in terms of how signals are mapped into posterior beliefs.}

Fixing the marginal distribution of states ensures that the principal and the agent agree \textit{ex ante} on base rates of underlying conditions. Fixing the marginal distribution of signals reflects agreement on the empirical frequency of signal realizations, which can be learned from repeated exposure. Disagreement is therefore confined to how signal realizations are interpreted, that is, to the mapping from signals to posterior beliefs over states. Different elements of $\mathcal{G}$ correspond to different inference rules that merge the same distributions of states and signals.

Throughout the analysis, belief design should be understood as a methodological device rather than as a literal description of how firms intervene in employees’ beliefs. In practice, firms observe a pool of candidates who differ along several dimensions, including preferences, information, and worldviews. The analysis abstracts from heterogeneity in preferences and information quality and focuses instead on variation in how candidates interpret information. Under this interpretation, the solution to the belief design problem identifies the type of agent (identified by his beliefs) that a firm would optimally select for a given position, given the informational and incentive requirements of the role.\footnote{Alternatively, it can also reflect the principal’s ability to  (costlessly) train the agent on how to interpret his information. See \citet{gervais2001learning}, \citet{haran2010simple}, and \citet{meikle2016overconfidence} for practices that can mitigate or exacerbate miscalibration.}

\textbf{Timing.}
The timing of events is as follows:
\begin{enumerate}
	\item Belief design: the principal chooses $g\in\mathcal{G}$.
	\item Nature draws $(\theta,s)$ according to $f$.
	\item The agent observes $s$, interpreting it according to $g$, and chooses $x\in\R$.
\end{enumerate}
Note that the description of the timing assumes that the agent learns nothing from the principal’s choice. This is, interpreting belief design as a selection process, candidates’ beliefs remain the same whether they are selected by the principal or not.

\section{Applications and Definitions}

\label{sec:appsanddefs}

\textbf{Applications.}
I now discuss two specific applications. The purpose is to illustrate the type of situations in which the forces captured by the model are relevant. I later use these applications to illustrate the implications of the results.

First, consider a CEO (the principal) selecting a middle manager (the agent) to oversee the compensation of a subordinate. Specifically, the manager chooses a reward level $x\in\R$ for his subordinate after privately observing a signal of the subordinate’s actual performance. The state represents the CEO’s ideal level of reward given the subordinate’s actual performance, which is never directly observed. The signal is normalized to represent the CEO’s ideal level of reward based on a noisy measure of the subordinate’s performance.

A plausible concern for the CEO is that the manager would be overly reluctant to provide low rewards to his subordinate. In other words, while the two parties may be relatively aligned when the subordinate deserves a bonus or a promotion, the manager may be averse to firing the subordinate, even when the CEO wants exactly that.

Second, consider the director of a nuclear power plant (the principal) selecting a risk manager (the agent) to monitor and control safety in the plant’s operations. The manager chooses a level of risk abatement $x\in\R$ after having privately observed a signal about the safety conditions of the operations. The state represents the director’s ideal abatement level based on the plant's actual performance. As in the previous application, the signal represents the director’s ideal abatement level based on a noisy measure of the plant's performance. In this case, both individuals would agree on the right course of action when the risk is high: avoiding an accident is a shared goal when calamities are likely. On the other hand, when the risk is sufficiently low, the manager does not see the need to keep abatement at levels desired by the director.

In both contexts, I ask: when is an overconfident or an underconfident manager optimal? This question, however, lacks a complete meaning without a proper definition of overconfidence. This is precisely the next step.

\textbf{Defining overconfidence.}
My goal is to provide conditions under which the optimal agent displays beliefs that can be interpreted as  over- or under-precision. If the optimal agent’s beliefs coincide with those prescribed by the true distribution $f$, then I say that he is \textit{well-calibrated}. Otherwise, I need to compare a given solution $g^*$ with $f$. I propose using a well-known stochastic order to make such comparisons that reflect the notion of over- or under-precision. This order is based on the concept of concordance, which roughly corresponds to large values of the state going together with large values of the signal. Thus, an increase in the concordance between these two random variables can be interpreted as the signal ``revealing more’’ about the state.

\begin{definition}[Concordance Order]\footnote{See \cite{tchen1980inequalities}, \cite{epstein1980increasing}, \citet[Chapter~9]{shaked2007stochastic}, \cite{meyer2012increasing}, and \cite{mekonnenbayesian}.}
	\label{def:overconfidence}
	$g\in\mathcal{G}$ dominates $f$ in the \textbf{concordance order}, denoted $g\succeq f$, if and only if for all $k\in \{1,\ldots,n\}$ and $l\in \{1,\ldots,m\}$ we have
	\[
	\sum_{i=1}^{k}\sum_{j=1}^{l}g(\theta_i,s_j)\geq \sum_{i=1}^{k}\sum_{j=1}^{l}f(\theta_i,s_j).
	\]
\end{definition}

The concordance order ranks two distributions by comparing their cumulative functions pointwise. When $g\succeq f$, the probability that the realizations of the state and the signal are both ``small’’ is higher under $g$ than under $f$. Since $g$ and $f$ have the same marginals, it is also true that the probability that the realizations of the state and the signal are both high increases when the distribution changes from $f$ to $g$ \citep[][Theorem~3]{epstein1980increasing}. Therefore, an agent with beliefs $g\succeq f$ interprets higher signal realizations as \textit{stronger} evidence of higher states than his well-calibrated counterpart. To the extent that overprecision is informally understood as excessive faith in one’s information, this definition captures exactly that.\footnote{In the present two-dimensional setting, other orders—the supermodular stochastic order, greater weak association, the convex-modular order, and the dispersion order—coincide with the concordance order \citep[see][]{meyer2012increasing}.}

Additionally, $g\succeq f$ implies that the Pearson correlation coefficient, Kendall rank correlation coefficient, and Spearman's rank correlation coefficient between the state and the signal are all higher under $g$ than under $f$ \citep{tchen1980inequalities}. The converse is not true in general. Thus, the concordance order is more conservative than, for instance, comparing covariances as a criterion to define overprecision, which also lacks a strong justification beyond the multivariate-normal case.

In canonical environments, this definition coincides with more familiar notions of overprecision. For instance, in the multivariate-normal  or truth-or-noise models, an increase in concordance between the state and the signal is equivalent to a reduction in the agent’s subjective posterior variance. In these settings, an agent whose beliefs dominate the true distribution in the concordance order places more weight on the signal and responds more aggressively to information, as in standard models of overprecision. The advantage of the concordance order is that it extends this intuition to nonparametric information structures in which posterior variance alone does not fully characterize belief responses.

Finally, note that it is possible to have a distribution $g\preceq f$ that reverses the relationship between states and signal realizations, such that a high signal becomes evidence of a low state. This extreme change in the interpretation of the signal is not compatible with the idea of overprecision. However, given the maintained assumptions, this issue does not arise in the present setting. In particular, $y(\cdot)$ being increasing guarantees that the solution to the belief design problem is always above, in the concordance order, the independent distribution; that is, any solution $g^*$ satisfies $g^*(\theta,s)\geq f_\Theta(\theta)f_S(s)$ for all $(\theta,s)$.

Thus, if the solution satisfies $g^*\neq f$ and $g^*\succeq f$, I say that the optimal agent is \textit{overconfident}. Analogously, when $f\succeq g^*$, I say that he is \textit{underconfident}.

\section{Analysis}
\label{sec:analysis}
For any $g\in\Delta(\Theta\times S)$ and signal realization $s\in S$, the agent optimally chooses $\E_g[y(\tilde{\theta})|s ]$. Note that all feasible beliefs yield the same average action:
\begin{equation*}
	\E_f[\E_g[y(\tilde{\theta})|s ]]=\E_g[\E_g[y(\tilde{\theta})|s ]]=\E_g[y(\tilde{\theta})]=\E_f[y(\tilde{\theta})]
\end{equation*}
where the first equality follows from $f_S=g_S$, the second from the law of iterated expectations, and the third from $g_\Theta=f_\Theta$.

Knowing the agent’s actions, I can express the principal’s expected payoff as a function of any belief $g$ as  $U(g):=-\E_f[(\E_g[y(\tilde{\theta})|\tilde{s}]-\tilde{\theta})^2]$. As a result, belief design corresponds to the following optimization problem:
\[
\max_{g\in\mathcal{G}} \ U(g).
\]
I will divide the analysis of this problem into two parts. As a first step, I focus on the simplest version of the model, in which the state and signal spaces are binary. Then, I discuss the key insights that extend to the general version of the model. Proofs are provided in \autoref{sec:proofs}.
\subsection{Binary states and signals}
\label{subsec:binary}
Assume that $n=m=2$. For any true distribution $f$, the set of feasible choices for the principal can be characterized by a single scalar. Formally, any  $g\in\mathcal{G}$ can be decomposed as $f+\tau b b^\top$ where $\tau\in\R$ is a scalar and $b:=(1,-1)^\top$ is a fixed vector.\footnote{Thus, the decomposition looks as follows\begin{align*}
		g=\left[
		\begin{tabular}{c c}
			$f(\theta_1,s_1)+\tau$ & $f(\theta_1,s_2)-\tau$ \\
			$f(\theta_2,s_1)-\tau$ & $f(\theta_2,s_2)+\tau$ 
		\end{tabular}
		\right]=\left[
		\begin{tabular}{c c}
			$f(\theta_1,s_1)$ & $f(\theta_1,s_2)$ \\
			$f(\theta_2,s_1)$ & $f(\theta_2,s_2)$ 
		\end{tabular}
		\right]+\tau \left[
		\begin{tabular}{c c}
			$1$ & $-1$ \\
			$-1$ & $1$ 
		\end{tabular}
		\right]=f+\tau b b^\top.
\end{align*}} 
 To see why this is true, suppose we modify the probability of some pair of states and signals, say $(\theta_1,s_1)$, by adding some amount $\tau$. Since marginal probabilities must remain constant, we need to adjust the probability of $(\theta_1,s_2)$ and  $(\theta_2,s_1)$ by adding $-\tau$. As a consequence, we also need to add $\tau$ to the probability of $(\theta_2,s_2)$. The scalar $\tau$ parameterizes what is commonly referred to as an \textit{elementary transformation} \cite[see][]{epstein1980increasing}. 

Furthermore, $\tau$ is constrained by the fact that the resulting $g$ must be a well-defined probability mass function, i.e., $g(\theta_i,s_j)\in[0,1]$. Consequently, $\tau$ must belong to some compact interval $[\underline{\tau},\bar{\tau}]$.\footnote{The bounds on the interval are as follows \begin{align*}
		\underline{\tau}&:=-\min\left\{f(\theta_1,s_1),1-f(\theta_1,s_2),1-f(\theta_2,s_1),f(\theta_2,s_2) \right\}\\
		\bar{\tau}&:=\min\left\{1-f(\theta_1,s_1),f(\theta_1,s_2),f(\theta_2,s_1),1-f(\theta_2,s_2) \right\}.
\end{align*}}
The full support assumption  guarantees that $\underline{\tau}<0<\bar{\tau}$. In other words, any marginal deviation from the true distribution is feasible.

As a result, the belief design problem can be seen as choosing $\tau\in[\underline{\tau},\bar{\tau}]$ to maximize $U( f+\tau b b^\top)$, which is a well-behaved concave maximization program. Before discussing its solution, note that an increase in $\tau$ corresponds to a shift in probability mass from the off-diagonal of $g$ to its diagonal. That is, the agent’s belief that the state is $\theta_i$ after observing $s_i$ increases with $\tau$. Therefore, I interpret $\tau$ as the agent’s \textit{level of confidence}: an agent with $\tau>0$ is \textit{overconfident},  with $\tau<0$ is \textit{underconfident}, and \textit{well-calibrated} otherwise. This is both intuitive and consistent with the definition of overconfidence proposed in \autoref{sec:appsanddefs}.

Increasing $\tau$ amplifies the agent’s responsiveness to signals, pushing actions closer to $y(\theta_i)$ after observing $s_i$. Consequently, a higher $\tau$ raises the action after $s_2$ and lowers it after $s_1$. Overprecision benefits the principal when she would like to reduce the agent’s action more after $s_1$ than she would like to increase it after $s_2$. Formally, this requires $c(\theta_1) > c(\theta_2)$, or equivalently $\theta_2 - \theta_1 > y(\theta_2) - y(\theta_1)$. Intuitively, increasing $\tau$ reallocates responsiveness toward the signal realization associated with the larger misalignment.

The following result describes the optimal agent in the binary case.
\begin{proposition}
	\label{prop:2x2}
	The unique optimal agent is
	\begin{enumerate}
		\item well-calibrated if and only if $\theta_2-\theta_1=  y(\theta_2)-y(\theta_1)$.
		\item overconfident if and only if  $\theta_2-\theta_1>  y(\theta_2)-y(\theta_1)$.
		\item underconfident if and only if $\theta_2-\theta_1< y(\theta_2)-y(\theta_1)$.
	\end{enumerate}
\end{proposition}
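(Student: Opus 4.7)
The plan is to exploit the one-parameter reduction already introduced in the paper and turn the belief design problem into a strictly concave quadratic program in $\tau$ over $[\underline{\tau},\bar{\tau}]$. Writing $\Delta y := y(\theta_2)-y(\theta_1) > 0$ and $\Delta\theta := \theta_2 - \theta_1 > 0$, I would first express the agent's action after each signal realization as a function of $\tau$. Because $g_S = f_S$ and probability mass is merely shifted along the diagonal of $f$, a short calculation yields
\[
a_1(\tau) := \E_g[y(\tilde\theta)\mid s_1] = \E_f[y(\tilde\theta)\mid s_1] - \tau\,\Delta y / f_S(s_1),
\]
and symmetrically $a_2(\tau) = \E_f[y(\tilde\theta)\mid s_2] + \tau\,\Delta y / f_S(s_2)$. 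Both actions are affine in $\tau$ with opposite slopes, so raising $\tau$ pushes the equilibrium action toward $y(\theta_j)$ after signal $s_j$.

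Next, I would rewrite the objective as $U(\tau) = -\sum_{i,j} f(\theta_i,s_j)(a_j(\tau)-\theta_i)^2$. Differentiating twice gives $U''(\tau) = -2\,\Delta y^2\,[1/f_S(s_1)+1/f_S(s_2)] < 0$, so the program is strictly concave and admits a unique maximizer $\tau^\ast \in [\underline{\tau},\bar{\tau}]$. Grouping the first-order terms by signal realization and using the tower property, the first derivative simplifies to
\[
U'(\tau) = -2\,\Delta y\,\bigl[(a_2(\tau)-a_1(\tau)) - (\E_f[\tilde\theta\mid s_2] - \E_f[\tilde\theta\mid s_1])\bigr].
\]

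The key step is evaluating $U'(0)$. Both bracketed differences factor through $\Delta P := P_f(\theta_2\mid s_2) - P_f(\theta_2\mid s_1)$: namely, $\E_f[y(\tilde\theta)\mid s_j] = y(\theta_1) + \Delta y\,P_f(\theta_2\mid s_j)$ and analogously $\E_f[\tilde\theta\mid s_j] = \theta_1 + \Delta\theta\,P_f(\theta_2\mid s_j)$. Substituting yields $U'(0) = 2\,\Delta y\,\Delta P\,(\Delta\theta - \Delta y)$. By the monotonicity of $y$, $\Delta y > 0$; and by the signal labeling convention, $\E_f[\tilde\theta\mid s_2] > \E_f[\tilde\theta\mid s_1]$, which forces $\Delta P > 0$. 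Therefore $\operatorname{sign} U'(0) = \operatorname{sign}(\Delta\theta - \Delta y)$.

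To conclude, strict concavity together with the full-support inequalities $\underline{\tau} < 0 < \bar{\tau}$ imply that the unique maximizer $\tau^\ast$ has the same sign as $U'(0)$: zero when $\Delta\theta = \Delta y$, strictly positive when $\Delta\theta > \Delta y$ (possibly at the boundary $\bar{\tau}$), and strictly negative when $\Delta\theta < \Delta y$. Translating $\tau^\ast$ into the concordance-based language of \autoref{sec:appsanddefs}, an identification already verified in the discussion preceding the proposition, yields the three cases. The only non-routine ingredient is the algebraic factorization through $\Delta P$ that collapses $U'(0)$ into a single scalar multiple of $\Delta\theta - \Delta y$; once that factorization is in hand, strict concavity disposes of the rest, and no delicate boundary analysis is needed because the sign of $\tau^\ast$ is pinned down by $U'(0)$ alone.
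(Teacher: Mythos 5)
Your proposal is correct and follows essentially the same route as the paper: the one-parameter reduction $g=f+\tau bb^\top$, strict concavity of $U$ in $\tau$, and a sign analysis of the first-order condition, with the sign of $U'(0)=2\,\Delta y\,\Delta P\,(\Delta\theta-\Delta y)$ playing the role of the paper's explicit formula $\tau^*=-\frac{c(\theta_2)-c(\theta_1)}{y(\theta_2)-y(\theta_1)}\abs{f}$ (note $\Delta P=\abs{f}/(f_S(s_1)f_S(s_2))$, so the two are the same computation). Your way of reading the sign of the constrained maximizer directly off $U'(0)$ handles corner solutions slightly more cleanly than the paper's remark about boundary points, but it is a stylistic rather than substantive difference.
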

Alternatively, the correlation between the conflict of interest and the agent’s preferred action determines the confidence of the optimal agent. In particular, the optimal agent is overconfident when such a correlation is negative.

This result implies that a well-calibrated agent is optimal if and only if his bias is additive, i.e., $y(\theta)=\theta+a$ for some $a\in\R$.  This extends to the optimal agent as well; he would ideally behave \textit{as if} his bias were additive. However, the bounds on $\tau$ may prevent the principal from getting that far.

In addition, the interior optimum level of confidence is given by
\begin{align*}
	\tau^*=-\frac{Cov_f(c(\tilde{\theta}),y(\tilde{\theta}))}{Var_f(y(\tilde{\theta}))}\abs{f}=[\frac{Cov_f(\tilde{\theta},y(\tilde{\theta}))}{Var_f(y(\tilde{\theta}))}-1]\abs{f},
\end{align*}
where $\abs{f}$ denotes the determinant of the matrix $f$.

Since signal realizations are labeled so as to index the induced conditional expectation over states, it follows that $\abs{f}>0$. If $y(\cdot)$ is increasing, this directly implies that $\tau^*>-\abs{f}\in(\underline{\tau},0)$. An agent with a level of confidence $\tau=-\abs{f}$ acts as if the signal were uninformative; therefore, the optimal agent always places some positive weight on the signal.

Moreover, the optimal level of confidence increases with the slope of the best affine predictor of the principal’s preferred action, $\tilde{\theta}$, based on the agent’s preferred action, $y(\tilde{\theta})$. If, for given primitives, $\tau^*>\bar{\tau}$, then the optimal agent is maximally overconfident.

I now illustrate the implications of the result for the applications described in \autoref{sec:appsanddefs}. In the CEO-manager example, the CEO's concern about the manager’s reluctance to punish a subordinate is captured by assuming $y(\theta_1)>\theta_1$ and $y(\theta_2)\approx\theta_2$. It follows that $c(\theta_1)>c(\theta_2)$, and by  \autoref{prop:2x2}, the CEO would strictly prefer to hire an \textit{overconfident} manager.

By contrast, in the nuclear power plant example, the friction stems from the manager’s reluctance to mitigate risks when an accident is unlikely. This can be represented by assuming $y(\theta_1)<\theta_1$ and $y(\theta_2)\approx\theta_2$. In this case, $c(\theta_1)<c(\theta_2)$, and the optimal risk manager is \textit{underconfident}.

\subsection{General case}
\label{subsec:rational}
In this subsection, I consider the general case with an arbitrary number of states and signals. As in the binary model, the principal’s choice is a joint distribution with fixed marginals, which determines how the agent maps signal realizations into posterior beliefs. Because the agent’s action depends only on these posterior beliefs, the principal’s expected payoff depends on the choice of $g$ only through the agent’s conditional expectations. This observation allows the analysis of the general case to proceed along the same lines as in the binary setting.

I begin by noting that the principal’s objective in the belief design problem admits the following decomposition:
\begin{align*}
	U(g)=-\E_f[(\E_f[\tilde{\theta}|\tilde{s}]-\tilde{\theta})^2 ]-\E_f[c(\tilde{\theta})]^2-Var_f(\E_g[y(\tilde{\theta})|\tilde{s}] -\E_f[\tilde{\theta}|\tilde{s}]).
\end{align*}
The first term represents the payoff that the principal could obtain if she were informed and in charge of choosing the action. It corresponds to a loss due to the residual uncertainty in the environment. The second term reflects a loss due to the average bias introduced by the agent's choice. Finally, the agent’s beliefs affect the objective only through the last term, which represents a loss due to variance in the agent’s actions beyond the adjustments that the principal herself would make.

This decomposition illustrates that the principal ideally wants $\E_g[y(\tilde{\theta})|\tilde{s}] -\E_f[\tilde{\theta}|\tilde{s}]$ to be constant.  In other words, the ideal agent is one who acts \textit{as if} his bias were additive. Therefore, if the expected conflict of interest is invariant in the signal, i.e. $\E_f[y(\tilde{\theta})|\tilde{s}] =\E_f[\tilde{\theta}|\tilde{s}]+b$, the well-calibrated agent would be optimal since he is feasible and already behaves as the principal ideally wants. It turns out that this condition is also necessary. Whenever the conditional expectation of the conflict of interest varies with the signal, there is a marginal deviation from the true distribution that strictly increases the principal’s expected payoff. This is achieved by decreasing the agent’s action after a signal realization that leads to a higher expected conflict of interest, while increasing the agent’s action after a realization that leads to a lower expected conflict of interest. The construction of a distribution that improves upon the true one follows the same logic as in the binary case; the details are discussed in the proof of the following result (see \autoref{sec:proofs}).

\begin{proposition}
	\label{prop:nxm1}
	The optimal agent is well-calibrated if and only if  \ $\E_{f}[y(\tilde{\theta})-\tilde{\theta}|\tilde{s}]$ is constant.
\end{proposition}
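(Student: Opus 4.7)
The approach is to read the characterization directly off the variance decomposition of $U(g)$ provided in the text. Since the first two terms of that decomposition depend only on $f$, the belief design problem is equivalent to minimizing
$V(g):=Var_f(\E_g[y(\tilde\theta)\mid\tilde s]-\E_f[\tilde\theta\mid\tilde s])$ over $g\in\mathcal{G}$.

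For the ``if'' direction, I would evaluate $V$ at the feasible choice $g=f$. This gives $V(f)=Var_f(\E_f[c(\tilde\theta)\mid\tilde s])$, which equals zero precisely when $\E_f[c(\tilde\theta)\mid\tilde s]$ is constant. Since $V\geq 0$, the well-calibrated agent is then a minimizer.

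For the ``only if'' direction, I would argue by contrapositive: if $\E_f[c(\tilde\theta)\mid\tilde s]$ is nonconstant, I would exhibit a feasible deviation from $f$ that strictly reduces $V$. Generalizing the elementary transformation of Subsection \ref{subsec:binary}, pick signals $s_j,s_l$ with $\E_f[c(\tilde\theta)\mid s_j]\neq \E_f[c(\tilde\theta)\mid s_l]$, and define $g_\tau$ from $f$ by adding $\tau$ to the cells $(\theta_1,s_l)$ and $(\theta_n,s_j)$ and subtracting $\tau$ from $(\theta_1,s_j)$ and $(\theta_n,s_l)$. Full support of $f$ ensures $g_\tau\in\mathcal{G}$ for small $|\tau|$, and both marginals are preserved by construction.

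The key calculation is then to differentiate $V(g_\tau)$ at $\tau=0$. Under this perturbation only the conditional expectations at $s_j$ and $s_l$ shift, each linearly in $\tau$ with slopes of magnitude $(y(\theta_n)-y(\theta_1))/f_S(\cdot)$ and opposite signs, while the grand mean $\E_{f_S}[\E_{g_\tau}[y(\tilde\theta)\mid\tilde s]]=\E_f[y(\tilde\theta)]$ is pinned down by the marginal constraints. After carrying this through, the derivative reduces to a nonzero multiple of $(y(\theta_n)-y(\theta_1))\bigl(\E_f[c(\tilde\theta)\mid s_j]-\E_f[c(\tilde\theta)\mid s_l]\bigr)$, which is nonzero by strict monotonicity of $y$ and the choice of $j,l$. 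Choosing the sign of $\tau$ accordingly produces a feasible $g_\tau\in\mathcal{G}$ with $V(g_\tau)<V(f)$. The main obstacle is this perturbation step: one must simultaneously guarantee feasibility of the elementary transformation and compute how $V$ reacts to the change, so that the sign of the directional derivative can be pinned down. Once that is done, the characterization closes.
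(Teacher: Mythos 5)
Your proposal is correct and follows essentially the same route as the paper: the "if" direction reads off the payoff decomposition, and the "only if" direction uses a marginal-preserving elementary transformation on two signal columns whose directional derivative is a nonzero multiple of $\bigl(y(\theta_{k'})-y(\theta_k)\bigr)\bigl(\E_f[c(\tilde\theta)\mid s_{l'}]-\E_f[c(\tilde\theta)\mid s_l]\bigr)$. The only cosmetic differences are that you work with the variance term rather than $U$ itself and fix the extreme states $\theta_1,\theta_n$ where the paper uses arbitrary $\theta_k<\theta_{k'}$.
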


Moreover, because the marginals of $g$ equal those of $f$, $\E_g[y(\tilde{\theta})|\tilde{s}] -\E_f[\tilde{\theta}|\tilde{s}]$ can only equal one constant,  which is $\E_f[c(\tilde{\theta})]$. Therefore, for this ideal agent, we have that
$\E_g[y(\tilde{\theta})|\tilde{s}] -\E_f[y(\tilde{\theta})|\tilde{s}]=\E_f[c(\tilde{\theta})]-\E_f[c(\tilde{\theta})|\tilde{s}]$. When $\E_f[c(\tilde{\theta})|s]$ is decreasing in $s$, the ideal agent’s optimal action would be below the well-calibrated agent’s action for \textit{low} signals realizations, while the opposite is true for \textit{high} ones. Since $\E_f[y(\tilde{\theta})|s]$ is increasing, this pattern corresponds to more extreme actions by the agent, a direct manifestation of overconfidence. As a result, the idea that overconfidence is optimal when the principal would adjust the action more than the agent generalizes beyond the binary special case. The following result formalizes this intuition.
\begin{proposition}
	\label{prop:nxm2}
	There exists $\bar{\alpha}>0$ such that if $\abs{  \E_{f}[c(\tilde{\theta})|s]-\E_{f}[c(\tilde{\theta})]  }\leq \bar{\alpha}$
	for all $s\in S$, then
	\begin{itemize}
		\item $\E_{f}[y(\tilde{\theta})|s’]-\E_{f}[y(\tilde{\theta})|s]<\E_{f}[\tilde{\theta}|s’]-\E_{f}[\tilde{\theta}|s]$ for all $s’>s$ implies that any optimal agent acts as an overconfident agent.
		
		\item $E_{f}[y(\tilde{\theta})|s']-\E_{f}[y(\tilde{\theta})|s]>\E_{f}[\tilde{\theta}|s']-\E_{f}[\tilde{\theta}|s]$ for all $s'>s$ implies that any optimal agent acts as an underconfident agent. 
	\end{itemize}
\end{proposition}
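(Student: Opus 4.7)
The plan is to exploit the variance decomposition of $U(g)$ stated immediately before the proposition. Because the first two terms of $U(g)$ do not depend on $g$, maximizing $U$ is equivalent to minimizing $Var_f\bigl(\E_g[y(\tilde{\theta})|\tilde{s}]-\E_f[\tilde{\theta}|\tilde{s}]\bigr)$. The marginal constraints defining $\mathcal{G}$ force $\E_f\bigl[\E_g[y(\tilde{\theta})|\tilde{s}]-\E_f[\tilde{\theta}|\tilde{s}]\bigr]=\E_f[c(\tilde{\theta})]$, so this variance attains its minimum value zero exactly when $\E_g[y(\tilde{\theta})|s]=\E_f[\tilde{\theta}|s]+\E_f[c(\tilde{\theta})]$ for every $s$. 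Denote this \emph{target action profile} by $a^{\ast}(s)$.

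The core of the proof is a feasibility argument: I will show that there exists $\bar{\alpha}>0$ such that, whenever $\abs{\E_f[c(\tilde{\theta})|s]-\E_f[c(\tilde{\theta})]}\le\bar{\alpha}$ holds for all $s$, some $g^{\ast}\in\mathcal{G}$ induces exactly the profile $a^{\ast}$. The map $T:g\mapsto\bigl(\E_g[y(\tilde{\theta})|s_j]\bigr)_{j=1}^{m}$ is linear on $\mathcal{G}$, and full support of $f$ places $f$ in the relative interior of $\mathcal{G}$. Strict monotonicity of $y$ ensures that elementary transformations across pairs of rows and pairs of columns, generalizing the binary construction, supply enough directions for $T$ to map a relative neighborhood of $f$ onto a relative neighborhood, within the hyperplane $\bigl\{v\in\R^{m}:\sum_{j}f_{S}(s_{j})v_{j}=\E_f[y(\tilde{\theta})]\bigr\}$, of the well-calibrated profile $T(f)$. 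The required deviation,
\[
a^{\ast}(s_{j})-\E_f[y(\tilde{\theta})|s_{j}]=\E_f[c(\tilde{\theta})]-\E_f[c(\tilde{\theta})|s_{j}],
\]
has zero $f_{S}$-weighted mean, hence lies in this hyperplane, and has supremum norm at most $\bar{\alpha}$. A sufficiently small $\bar{\alpha}$ confines the deviation inside the feasible neighborhood.

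Given feasibility of $a^{\ast}$, every optimal $g^{\ast}$ must induce it, because $a^{\ast}$ is the unique action profile for which the variance term in $U$ vanishes. Under the first hypothesis, the condition $\E_f[y(\tilde{\theta})|s']-\E_f[y(\tilde{\theta})|s]<\E_f[\tilde{\theta}|s']-\E_f[\tilde{\theta}|s]$ for all $s'>s$ is equivalent to $\E_f[c(\tilde{\theta})|s]$ being strictly decreasing in $s$. Therefore $a^{\ast}(s)-\E_f[y(\tilde{\theta})|s]=\E_f[c(\tilde{\theta})]-\E_f[c(\tilde{\theta})|s]$ is strictly increasing in $s$: the optimal agent's action lies below the well-calibrated one at low signal realizations and above it at high realizations. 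This is precisely the ``spreading'' of conditional expectations that characterizes distributions strictly above $f$ in the concordance order, so the optimal agent acts as an overconfident agent. The underconfidence case follows by reversing all inequalities.

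The main obstacle is the feasibility step: pinning down $\bar{\alpha}$ and verifying that the image $T(\mathcal{G})$ contains a full relative neighborhood of $T(f)$ inside the relevant hyperplane. A dimension count $(n-1)(m-1)\ge m-1$ supports surjectivity of the linearized map, and a concrete construction using the elementary transformations already employed in \autoref{subsec:binary} produces the required perturbations. Once this is in hand, the identification of the optimal action pattern with over- or under-confidence follows directly from the sign of $s\mapsto\E_f[c(\tilde{\theta})]-\E_f[c(\tilde{\theta})|s]$.
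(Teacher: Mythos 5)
Your reduction of the problem to inducing the target profile $a^{\ast}(s)=\E_f[\tilde{\theta}\mid s]+\E_f[c(\tilde{\theta})]$ is exactly the paper's strategy: the variance term is the only part of $U$ that depends on $g$, its $f_S$-weighted mean is pinned at $\E_f[c(\tilde{\theta})]$ by the marginal constraints, and strict concavity of the objective in the induced action profile gives uniqueness of the optimal actions. Your feasibility step is sketched where the paper is explicit (the paper writes $g=f+D_n^\top t D_m$ and exhibits a concrete family $t^\phi$ whose entries are bounded by $\alpha/(y(\theta_{k+1})-y(\theta_k))$, from which the existence of $\bar{\alpha}$ follows), but the direction of your argument there is sound.

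The genuine gap is in your final identification step. You assert that the pattern ``action below the well-calibrated one at low signals and above it at high signals'' is \emph{precisely} what characterizes distributions above $f$ in the concordance order. It is not: concordance dominance is a pointwise comparison of the joint cumulative distribution functions, and while $g\succeq f$ does imply more extreme conditional expectations of $y(\tilde{\theta})$ (for increasing $y$), the converse fails. There are many distributions in $\mathcal{G}$ inducing the same spread-out conditional means, and a generic one is unranked against $f$. Since the paper defines overconfidence as a property of the belief distribution ($g^{\ast}\neq f$ and $g^{\ast}\succeq f$), proving that ``any optimal agent acts as an overconfident agent'' requires exhibiting at least one optimizer that actually dominates $f$ in the concordance order; the conclusion for all other optimizers then follows because they induce the same actions. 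This is why the paper works with the representation $g=f+D_n^\top t D_m$ and constructs $t^\phi$ with all entries nonnegative: the equivalence $g\succeq f\iff t\geq 0$ converts concordance dominance into a sign condition on the elementary transformations, and the hypothesis that $\E_f[c(\tilde{\theta})\mid s]$ is decreasing delivers $t^\phi_{kl}\geq 0$ directly. Your proof, as written, never produces such a witness, so the step from the optimal action profile to the concordance-order conclusion is unsupported. Patching it requires either reproducing the paper's explicit nonnegative construction or an equivalent argument that the feasible set of optimizers contains a concordance-dominating element.
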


The previous result provides sufficient conditions for at least one optimal agent to be strictly ranked above or below the true distribution according to the concordance order.   The uniqueness result from the binary case is necessarily lost since there exist multiple distributions with the same marginals and conditional expectations. However, all optimal agents induce the same actions. Moreover, all optimal agents must be unranked among themselves; that is, if $g$ and $g’$ solve the belief design problem, it cannot be that $g\succeq g’$.

The results can be interpreted as explaining how different organizational roles generate demand for different employee worldviews. In the model, positions differ along two reduced-form dimensions. First, they differ in how much the firm’s desired action varies with information, captured by the dispersion of $\E_f[\tilde{\theta}|s]$ across signal realizations. Roles operating in volatile environments or at the interface with external markets naturally exhibit greater variation in this object, whereas more protocol-oriented or internally focused roles exhibit less variation. Second, positions differ in how preference misalignment varies across states, captured by the signal-dependent conflict of interest $\E_f[c(\tilde{\theta})|s]$. Together, these primitives determine whether a firm prefers an employee whose interpretation of information is well-calibrated, more extreme, or more muted. Under this interpretation, belief distortions arise endogenously as a response to the informational and incentive requirements of a given role.

The proof of \autoref{prop:nxm2} parallels that of $\autoref{prop:2x2}$. I start by changing the principal’s choice from $g\in \mathcal{G}$ to a matrix of \textit{elementary transformations} $t\in \R^{(n-1)\times(m-1)}$. Let $t_{kl}$ denote a typical entry of the matrix $t$. Note that $t_{kl}>0$ moves probability mass from $\theta_{k+1}$ to $\theta_k$ after  $s_l$ is realized, while the opposite happens after $s_{l+1}$. Informally, a positive elementary transformation moves mass from discordant pairs of states and signal realizations to the adjacent concordant pairs.  The next step is to analyze the first-order conditions that optimal elementary transformations need to satisfy. Then, I propose a family of transformations that satisfy these conditions. When all entries of $t$ are positive, the resulting distribution dominates the initial one in the concordance order. The assumptions in \autoref{prop:nxm2} guarantee that the proposed solution is both feasible and positive (or negative), which proves the existence of an overconfident (or underconfident) optimal agent.

In what follows, I explore the implications of additional tools available to the principal.  I go back to the assumption that $n=m=2$.  First, I will study how belief design is affected by the availability of action-contingent transfers. I will argue that transfers do not affect optimal beliefs as long as the expected conflict of interest is not too far away from zero. Moreover, in the optimum, each tool (belief and contract design) is used for different purposes. Additionally, I consider the possibility that the principal can make the choice herself. This imposes some restrictions on the characteristics of an agent who is actually allowed to make the choice: for the principal to delegate the choice, the agent’s confidence must be sufficiently high.

\section{Transfers}
\label{sec:transfers}
Conflicting preferences among members are a prominent challenge for organizations. The provision of monetary incentives is a particularly relevant tool for mitigating the pernicious effects of agency frictions. In this section, I consider the interaction between belief design and action-contingent transfers. In particular, in addition to belief design, the principal is also allowed to commit to non-negative payments contingent on the agent’s action. I turn the focus back to the $n=m=2$ case. The timing is as follows:
\begin{enumerate}
	\item Belief and contract design: the principal chooses $g\in\mathcal{G}$ and $w:\R\to\R_+$.
	\item Nature draws $(\theta,s)$ according to $f$.
	\item The agent observes $s$ and chooses $x\in\R$.
\end{enumerate}
Payoffs are given by $-(x-\theta)^2-w(x)$ for the principal and by $-(x-y(\theta))^2+w(x)$ for the agent.

As a first step, I set the problem as the principal recommending action $x_i$ after signal realization $s_i$, paying $w_i$ upon observing that action, and paying zero upon observing a non-recommended action. Additionally, in the binary case, belief design can be thought of as choosing a level of confidence $\tau\in[\underline{\tau},\bar{\tau}]$. Therefore, it suffices for the principal to consider tuples $(x_1,x_2,w_1,w_2,\tau)$ consisting of recommended actions, payments for those actions, and a level of confidence subject to obedience constraints.

The recommended actions must be incentive compatible given promised transfers and the agent’s confidence. After each signal realization, two deviations are relevant: to the other recommended action and to the best action among those that were not recommended. The best deviation to an action yielding no transfer corresponds to choosing $\E_\tau[y(\tilde{\theta})|s_i]$ after signal $s_i$.

The main result in this section provides conditions under which the optimal beliefs coincide with those described in \autoref{subsec:binary}, where transfers were not available. In the optimum, each tool plays a different role: transfers reduce the average bias in the agent’s actions, while beliefs distribute that bias across signal realizations.
\begin{proposition}
	\label{prop:transfers}
	Assume that $\abs{\E_{f}[c(\tilde{\theta})]}\leq \E_f[\tilde{\theta}|s_2]-\E_f[\tilde{\theta}|s_1]$. The availability of transfers does not change the optimal beliefs (so that \autoref{prop:2x2} applies) and
	\[w_1^*=w_2^*=\frac{1}{4}\E_f[c(\tilde{\theta})]^2.   \]
\end{proposition}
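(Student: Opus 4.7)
The plan is to apply a revelation-style reduction to tuples $(x_1, x_2, w_1, w_2, \tau)$: recommend action $x_i$ after $s_i$, pay $w_i\geq 0$ when $x_i$ is chosen, and pay $0$ off the recommended set (the harshest feasible punishment given $w\geq 0$). Two deviations are potentially binding after $s_i$: to the other recommended pair $(x_j, w_j)$, and to the agent's myopic-best action among the unrewarded set, which by quadratic-loss preferences is $\hat{x}_i := \E_g[y(\tilde{\theta})|s_i]$. The latter constraint reduces to $w_i \geq (x_i - \hat{x}_i)^2$, so that $w_i$ enters the principal's problem only through this lower bound.

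Next I tentatively drop the cross-IC between $s_1$ and $s_2$ and solve point-by-point: for each $i$, minimize $(x_i - \E_f[\tilde{\theta}|s_i])^2 + w_i$ subject to $w_i\geq (x_i - \hat{x}_i)^2$. Making the constraint bind, a scalar minimization yields $x_i^* = \tfrac{1}{2}(\E_f[\tilde{\theta}|s_i] + \hat{x}_i)$ and $w_i^* = \tfrac{1}{4}(\hat{x}_i - \E_f[\tilde{\theta}|s_i])^2$. Substituting back, the principal's reduced objective in $\tau$ becomes, up to terms independent of $\tau$,
\[
-\tfrac{1}{2}\,\E_f\bigl[(\E_g[y(\tilde{\theta})|\tilde{s}] - \E_f[\tilde{\theta}|\tilde{s}])^2\bigr].
\]
Because the mean $\E_f[\E_g[y(\tilde{\theta})|\tilde{s}] - \E_f[\tilde{\theta}|\tilde{s}]] = \E_f[c(\tilde{\theta})]$ is fixed across $g\in\mathcal{G}$, this is equivalent (again up to constants) to minimizing $Var_f(\E_g[y(\tilde{\theta})|\tilde{s}] - \E_f[\tilde{\theta}|\tilde{s}])$, which is exactly the variance term that drives the no-transfer problem solved by \autoref{prop:2x2}. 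Hence the same $\tau^*$ is optimal with transfers.

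At this $\tau^*$, a direct computation using $g = f + \tau bb^\top$ shows the single free parameter suffices to equalize $\hat{x}_i - \E_f[\tilde{\theta}|s_i]$ across $i$, so both residuals equal the common value $\E_f[c(\tilde{\theta})]$; this delivers $w_1^* = w_2^* = \tfrac{1}{4}\E_f[c(\tilde{\theta})]^2$. It remains to verify the cross-IC: after substitution, the constraint against deviating at $s_i$ to $(x_j^*, w_j^*)$ simplifies to $(2x_j^* - \hat{x}_i - \hat{x}_j)(\hat{x}_j - \hat{x}_i) \geq 0$; plugging in the ideal-agent identities $\hat{x}_j = \E_f[\tilde{\theta}|s_j] + \E_f[c(\tilde{\theta})]$ and $x_j^* = \E_f[\tilde{\theta}|s_j] + \tfrac{1}{2}\E_f[c(\tilde{\theta})]$ collapses both sign requirements to $\E_f[\tilde{\theta}|s_2] - \E_f[\tilde{\theta}|s_1] \geq |\E_f[c(\tilde{\theta})]|$, which is the stated hypothesis. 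The same inequality also controls the required swing in $\hat{x}_i$ and hence the interiority of $\tau^*$. I expect the joint verification of cross-IC feasibility and interiority of $\tau^*$ under this single inequality to be the most delicate step; my plan is to handle both through the hypothesis.
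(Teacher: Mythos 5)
Your proposal is correct and arrives at the paper's solution by the same relax-and-verify strategy: restrict to recommended actions with zero off-path wages, bind the ``deviate to the unrewarded best response'' constraints $w_i\ge(x_i-\hat x_i)^2$, drop the cross-obedience constraints, and check them at the candidate optimum, where they collapse exactly to $\abs{\E_f[c(\tilde\theta)]}\le \E_f[\tilde\theta\mid s_2]-\E_f[\tilde\theta\mid s_1]$ (your sign conditions $(m_j-m_i-\E_f[c(\tilde\theta)])(m_j-m_i)\ge 0$ are precisely the paper's ``Case~2'' restriction $x_1\le\bar\mu\le x_2$). Where you genuinely differ is the order of optimization, and your order buys a cleaner first claim: solving the separable point-by-point problem in $(x_i,w_i)$ for fixed beliefs yields the reduced objective $-\tfrac12\,\E_f\bigl[(\E_g[y(\tilde\theta)\mid\tilde s]-\E_f[\tilde\theta\mid\tilde s])^2\bigr]$, which, since the mean of the bracketed term is pinned at $\E_f[c(\tilde\theta)]$ over all of $\mathcal G$, is a monotone transformation of the no-transfer objective for \emph{every} feasible belief, so the invariance of optimal beliefs is immediate. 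The paper instead fixes $(x_1,x_2)$, minimizes expected wages over beliefs, and then recovers the actions via an envelope argument after a case analysis of which constraints bind; it establishes the coincidence of beliefs only at the optimum. One overreach: your closing claim that the maintained inequality also ``controls the interiority of $\tau^*$'' is not right. Interiority of $\tau^*$ is governed by the nonnegativity bounds $[\underline{\tau},\bar{\tau}]$ on the joint distribution, which are logically independent of how $\E_f[c(\tilde\theta)]$ compares with the posterior-mean spread. The flat-wage conclusion does require the residuals $\hat x_i-\E_f[\tilde\theta\mid s_i]$ to be equalized, i.e., the ideal beliefs to be feasible; since the paper's own proof minimizes over $\mu_1$ without checking feasibility of the minimizer, this is a lacuna shared with the paper rather than a defect specific to your route, but you should not attribute its resolution to the stated hypothesis.
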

This result also highlights the effects of belief design on optimal transfers. When beliefs are optimally chosen, wages do not change with the agent's actions in equilibrium. On the other hand, when only the well-calibrated agent is available, the optimum for the principal is given by
\begin{align*}
	x_j=\frac{1}{2}[\E_f[\tilde{\theta}|s_j]+\E_f[y(\tilde{\theta})|s_j]],\quad
	w_j=\frac{1}{4}\E_f[c(\tilde{\theta})|s_j]^2.
\end{align*}
Therefore, unless $\E_f[c(\tilde{\theta})|s_j]$ is constant (in which case the well-calibrated agent is indeed optimal), wages are different for both recommended actions. This is, optimal belief-based selection leads to ``flatter" compensation schemes.

\autoref{prop:transfers} is established under the assumption that both the state and the signal spaces are binary. This restriction is imposed for tractability rather than for conceptual reasons. The economic mechanism underlying the result (the separation between belief design, which governs how the agent’s actions vary with information, and transfers, which affect only the average bias) does not depend on the cardinality of the state or signal spaces.

Extending \autoref{prop:transfers} to the more general information structures considered in \autoref{prop:nxm2} would require a substantially more involved characterization of incentive compatibility, as transfers would need to be defined over a larger set of equilibrium actions and deviations. While such an extension is technically feasible, it is not pursued here. Nevertheless, the mechanisms identified in the binary model point to a limited role for transfers: under comparable conditions on the expected conflict of interest, optimal transfers would plausibly be flat across equilibrium actions and would not interfere with optimal belief design. \autoref{sec:truth} provides support for this intuition by establishing the result in a tractable continuous-state setting with a truth-or-noise information structure. For this reason, the binary case is used as the baseline environment for the analysis of transfers.

It is worth noting that \citet{ashworth2019delegation} studies the interaction between optimal delegation sets and transfers in a similar setting. When the principal and the agent differ only in their levels of confidence, the optimal mechanism does not involve transfers.

\section{Delegation}
\label{sec:delegation}
Consider the case in which the principal decides whether to delegate the decision or centralize it. Under centralization, the principal chooses an action based only on the information available \textit{a priori}. In this section, I characterize the conditions under which the principal prefers to delegate the decision to the agent, and the implications of this choice for the optimal agent’s level of confidence.

The main takeaway is that centralization is a substitute for \textit{extreme} underconfidence. The intuition is simple: if the optimal agent is not \textit{sufficiently} using the information, it would be better for the principal to make the choice herself in order to avoid the conflict of interest. The principal would rely solely on the agent's private information; otherwise, she would be better off avoiding the bias the agent introduces into the decision.

The following result shows that the optimal agent must be sufficiently confident for the decision to be delegated to him.

\begin{proposition}
	\label{prop:delegation}
	The choice is delegated to the optimal agent if and only if
	\[
	\tau^*\geq \frac{f_S(s_1)f_S(s_2)\E_f[c(\tilde{\theta})]^2}{\abs{f}(y(\theta_2)-y(\theta_1))(\theta_2-\theta_1)}-\abs{f}.
	\]
\end{proposition}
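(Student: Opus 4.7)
The plan is to directly compare the principal's expected payoff under delegation, evaluated at the optimal agent from Proposition \ref{prop:2x2}, to her payoff under centralization, and then to verify that the resulting inequality rearranges to $\tau^* \geq R$, where $R$ denotes the right-hand side of the statement. Under centralization the principal observes nothing beyond the prior, so she picks $x = \E_f[\tilde{\theta}]$ and obtains $-Var_f(\tilde{\theta})$. Under delegation with belief $g$, I would use the decomposition of $U(g)$ from Subsection \ref{subsec:rational},
\[
U(g) = -\E_f\bigl[(\E_f[\tilde{\theta}|\tilde{s}]-\tilde{\theta})^2\bigr] - \E_f[c(\tilde{\theta})]^2 - Var_f\bigl(\E_g[y(\tilde{\theta})|\tilde{s}]-\E_f[\tilde{\theta}|\tilde{s}]\bigr),
\]
and rewrite the first term via the law of total variance as $-Var_f(\tilde{\theta}) + Var_f(\E_f[\tilde{\theta}|\tilde{s}])$. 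Delegation therefore beats centralization if and only if, at the optimum $g^* = f + \tau^* b b^\top$,
\[
Var_f(\E_f[\tilde{\theta}|\tilde{s}]) - \E_f[c(\tilde{\theta})]^2 - Var_f\bigl(\E_{g^*}[y(\tilde{\theta})|\tilde{s}]-\E_f[\tilde{\theta}|\tilde{s}]\bigr) \geq 0.
\]

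The next step is to simplify the left-hand side at the optimal belief. The interior $\tau^*$ characterized in Subsection \ref{subsec:binary} satisfies $(y(\theta_2)-y(\theta_1))(|f|+\tau^*) = (\theta_2-\theta_1)|f|$, and this is precisely the condition that makes $\E_{g^*}[y(\tilde{\theta})|s_j] - \E_f[\tilde{\theta}|s_j]$ independent of $j \in \{1,2\}$. Consequently the last variance term vanishes and the delegation condition reduces to $Var_f(\E_f[\tilde{\theta}|\tilde{s}]) \geq \E_f[c(\tilde{\theta})]^2$. A direct binary-case computation gives $Var_f(\E_f[\tilde{\theta}|\tilde{s}]) = (\theta_2-\theta_1)^2 |f|^2 / [f_S(s_1) f_S(s_2)]$, so the condition becomes
\[
(\theta_2-\theta_1)^2 |f|^2 \geq f_S(s_1) f_S(s_2) \E_f[c(\tilde{\theta})]^2.
\]

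Finally, dividing both sides by $|f|(y(\theta_2)-y(\theta_1))(\theta_2-\theta_1)$ and substituting $\tau^* + |f| = |f|(\theta_2-\theta_1)/(y(\theta_2)-y(\theta_1))$ from Subsection \ref{subsec:binary} converts the inequality into exactly $\tau^* \geq R$. The only conceptual input is the vanishing of the excess-variance term at $\tau^*$, which is immediate from the characterization of $\tau^*$ in Subsection \ref{subsec:binary}; the rest is bookkeeping using the closed-form binary-case expression for $Var_f(\E_f[\tilde{\theta}|\tilde{s}])$ and the explicit formula for $\tau^*$. The main subtlety would arise if the interior $\tau^*$ fell outside the feasible range $[\underline{\tau},\bar{\tau}]$, in which case the delegation payoff would be evaluated at a binding boundary; following the convention of the paper, I read the statement with $\tau^*$ referring to the interior level of confidence identified in Subsection \ref{subsec:binary}.
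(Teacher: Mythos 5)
Your proposal is correct and follows essentially the same route as the paper: compare the delegation payoff (via the decomposition of $U(g)$) to $-Var_f(\tilde{\theta})$, note that the excess-variance term vanishes at the interior $\tau^*$, reduce the condition to $\E_f[c(\tilde{\theta})]^2 \le Var_f(\E_f[\tilde{\theta}\mid\tilde{s}])$, and rearrange using the closed forms for $Var_f(\E_f[\tilde{\theta}\mid\tilde{s}])$ (the paper's Lemma~\ref{lemma:vars}) and for $\tau^*$. Your explicit flagging of the caveat that $\tau^*$ refers to the interior candidate matches the paper's own (implicit) convention.
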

Naturally, an agent that is on average unbiased (i.e., $\E_f[c(\tilde{\theta})]=0$) would act just as the principal in the absence of any information. Therefore, when $\E_f[c(\tilde{\theta})]=0$ the choice would be delegated even to a maximally underconfident agent (one who thinks that signals are uninformative). On the other hand, whenever $\E_f[c(\tilde{\theta})]\neq 0$, the agent’s confidence must be strictly above  $-\abs{f}$ (the confidence of a maximally underconfident agent) for delegation to be optimal.

While extreme underconfidence never leads to delegation to an agent that is on average biased, it can be optimal to delegate to an extremely overconfident agent.\footnote{Consider the following example: $\theta_1=0$, $\theta_2=10$, $y(\theta)=3+\theta/3$, $f(\theta_i,s_i)=0.4$, and $f(\theta_i,s_{-i})=0.1$. We have that  $\tau^*=0.3>\bar{\tau}=0.1$, which implies that the optimal agent is maximally overconfident, interpreting signals at face value (after seeing $s_i$ this agent would be convinced that the state is $\theta_i$). The expected payoff from delegation is -17.89, while that from centralization is -25. Therefore, it is optimal to delegate to the maximally overconfident agent.} Therefore, while the two phenomena are modeled symmetrically, the possibility of centralizing decision-making highlights a key practical distinction between overconfidence and underconfidence.

Although \autoref{prop:delegation} is stated for the binary model, the mechanism is not specific to that setting. \autoref{sec:truth} shows that under a truth-or-noise information structure, delegation is optimal if and only if the optimally chosen agent places sufficient weight on the signal, so that extreme underconfidence is again incompatible with delegation.

\section{Conclusions}
\label{sec:conclusion}

There is abundant evidence about the pervasiveness of overconfident employees. This paper examines how organizations optimally select employees with different levels of confidence for decision-making roles when information interacts with conflicts of interest. In some cases, a firm is willing to select an employee not \textit{despite} his overconfidence but precisely \textit{because} of it. Thus, a reason overconfidence is ubiquitous is that the conditions that lead to its optimality are common across many environments. Under-responsiveness to the firm’s interests is “bread-and-butter behavior” when employees prioritize a quiet life over adjusting their behavior to current conditions, and it can be (at least partially) alleviated by the employee’s misperception of being more informed.

Interpreted through this lens, the analysis suggests that employees in roles requiring frequent adjustment to external conditions (such as sales, recruiting, marketing, brokerage, or public relations) are likely to have higher confidence than those working in roles emphasizing stability and adherence to established procedures or protocols (such as auditing, risk management, compliance, IT support, or data processing).

Moreover, belief-based selection interacts meaningfully with other measures that the firm could take to mitigate agency frictions. For example, an employee with optimal beliefs would face ``flatter" compensation schemes than his well-calibrated counterpart. Additionally, centralizing decision-making is a natural substitute for extreme underconfidence, while it can be optimal to delegate to a ``fully" overconfident employee.

Job positions within organizations are inherently multidimensional objects, combining informational requirements, incentive concerns, and institutional constraints. The analysis abstracts from many of these features and instead isolates two first-order characteristics: how information affects the firm’s desired action and how preference misalignment varies across states. This parsimonious structure allows for a tractable characterization of how different roles generate demand for different employee worldviews. While richer descriptions of job design may incorporate additional dimensions, the framework highlights a fundamental channel through which organizational roles shape the level of confidence that firms optimally seek in their employees.

The main methodological innovation, the belief design problem, allows the firm to exploit the heterogeneity in applicants’ confidence by selecting the candidate with the most favorable beliefs (keeping fixed all other individual characteristics). In this context, I show that belief-based selection leads to employees with a common feature: they tend to act \textit{as if} their disagreement with the firm was invariant to their private information. As a result, the firm prefers a well-calibrated agent if and only if this disagreement does not change with the employee’s observations to begin with. On the other hand, overconfidence helps when this conflict of interest runs counter to the employee’s preferred action. As a tool, belief design is flexible and can be adjusted to systematically examine other belief-based biases, such as overoptimism and wishful thinking.

Two assumptions were maintained throughout this exercise: that the potential employees are sufficiently diverse in their beliefs and that the firm perfectly observes their characteristics. The first assumption is made to provide a benchmark. The second is motivated by evidence suggesting that firms can learn quite rapidly about their employees’ characteristics \citep{lange2007speed,gradesandlearning}. Issues related to a constrained set of available employees and asymmetric information about applicants’ characteristics are promising avenues for future research.

These findings illustrate interesting ways in which personality traits, such as confidence, can affect labor market outcomes, including wages and career choices \citep[see][]{schulz2016overconfidence}. Similarly, they help explain several documented managerial and organizational practices targeted towards altering employees’ perceptions  \citep[see][]{haran2010simple,meikle2016overconfidence}. Additionally, if confidence has the potential to affect expected outcomes (in the labor market or otherwise), we may expect individuals to invest in ``adjusting" this personal characteristic according to their goals.\footnote{See \citet{KrepsNemmers} for a discussion about the role of business schools in boosting students’ confidence.}

More broadly, the analysis highlights a fundamental asymmetry between inducing action and restraining it in delegated decision-making. In many relationships, it may be easier to discipline excessive responses than to motivate agents who are intrinsically reluctant to respond to imperfect information. From this perspective, overconfidence is not valuable because it improves judgment, but because it lowers the threshold for action under uncertainty. Similar considerations arise in other contexts, such as parenting or supervision, where setting boundaries on overly proactive behavior may be less costly than pushing disengaged individuals to act. Decisions must often be taken before outcomes can be assessed. Organizations may therefore favor agents who are willing to act decisively, even at the risk of error. This view suggests a broader role for belief distortions in facilitating action when waiting for validation is itself costly. An interesting direction for future research is to model the relative costs of restraining versus motivating behavior explicitly, and to study how these asymmetries shape the selection of agents across a broader range of organizational and social relationships.
\newpage

\section*{References}
\bibliography{confidence}

\newpage

\appendix
	
	\section{Proofs}
	\label{sec:proofs}
	\renewcommand{\thedefinition}{\Alph{section}.\arabic{definition}}
	\renewcommand{\theassumption}{\Alph{section}.\arabic{assumption}}
	\renewcommand{\theproposition}{\Alph{section}.\arabic{proposition}}
	\renewcommand{\thelemma}{\Alph{section}.\arabic{lemma}}
	\renewcommand{\theequation}{\Alph{section}.\arabic{equation}}
	
	\setcounter{equation}{0}
	\setcounter{definition}{0}
	\setcounter{assumption}{0}
	\setcounter{proposition}{0}
	\setcounter{lemma}{0}
	
	\begin{proof}[Proof of \autoref{prop:2x2}]
		The belief design problem is
		\[
		\max_{\tau\in[\underline{\tau},\bar{\tau}]} U\!\left(f+\tau b b^\top\right).
		\]
		For $\tau\in[\underline{\tau},\bar{\tau}]$, the induced posteriors satisfy
		\begin{align*}
			\E_\tau[y(\tilde{\theta})\mid s_1]
			&=\E_f[y(\tilde{\theta})\mid s_1]-\frac{\tau}{f_S(s_1)}\bigl(y(\theta_2)-y(\theta_1)\bigr),\\
			\E_\tau[y(\tilde{\theta})\mid s_2]
			&=\E_f[y(\tilde{\theta})\mid s_2]+\frac{\tau}{f_S(s_2)}\bigl(y(\theta_2)-y(\theta_1)\bigr).
		\end{align*}
		Hence,
		\begin{align*}
			U(f+\tau b b^\top)
			=&-\sum_{i=1}^2\sum_{j=1}^2 f(\theta_i,s_j)\bigl(\E_\tau[y(\tilde{\theta})\mid s_j]-\theta_i\bigr)^2 .
		\end{align*}
		Differentiating and simplifying yields
		\begin{align*}
			\frac{\partial U(f+\tau b b^\top)}{\partial \tau}
			=
			2\bigl(y(\theta_2)-y(\theta_1)\bigr)\Bigl[
			\bigl(\E_\tau[y(\tilde{\theta})\mid s_1]-\E_f[\tilde{\theta}\mid s_1]\bigr)
			-
			\bigl(\E_\tau[y(\tilde{\theta})\mid s_2]-\E_f[\tilde{\theta}\mid s_2]\bigr)
			\Bigr].
		\end{align*}
		Moreover,
		\[
		\frac{\partial^2 U(f+\tau b b^\top)}{\partial \tau^2}
		=
		-2\frac{\bigl(y(\theta_2)-y(\theta_1)\bigr)^2}{f_S(s_1)f_S(s_2)}<0,
		\]
		so $U(f+\tau b b^\top)$ is strictly concave in $\tau$. The (unique) interior optimizer therefore solves the first-order condition
		\[
		\E_\tau[y(\tilde{\theta})\mid s_1]-\E_f[\tilde{\theta}\mid s_1]
		=
		\E_\tau[y(\tilde{\theta})\mid s_2]-\E_f[\tilde{\theta}\mid s_2].
		\]
		Using $c(\theta)=y(\theta)-\theta$, this is equivalent to
		\[
		\E_f[c(\tilde{\theta})\mid s_1]-\frac{\tau}{f_S(s_1)}\bigl(y(\theta_2)-y(\theta_1)\bigr)
		=
		\E_f[c(\tilde{\theta})\mid s_2]+\frac{\tau}{f_S(s_2)}\bigl(y(\theta_2)-y(\theta_1)\bigr).
		\]
		Next, note that
		\[
		\E_f[c(\tilde{\theta})\mid s_2]-\E_f[c(\tilde{\theta})\mid s_1]
		=
		\bigl(c(\theta_2)-c(\theta_1)\bigr)\,
		\frac{f(\theta_2,s_2)f(\theta_1,s_1)-f(\theta_1,s_2)f(\theta_2,s_1)}{f_S(s_1)f_S(s_2)}.
		\]
		Solving the first-order condition for $\tau$ gives the interior candidate
		\[
		\tau^*
		=
		-\frac{c(\theta_2)-c(\theta_1)}{y(\theta_2)-y(\theta_1)}
		\Bigl[f(\theta_2,s_2)f(\theta_1,s_1)-f(\theta_1,s_2)f(\theta_2,s_1)\Bigr].
		\]
		Equivalently, letting $\abs{f}:=f(\theta_1,s_1)f(\theta_2,s_2)-f(\theta_1,s_2)f(\theta_2,s_1)$ be the determinant of $f$, I can write
		\[
		\tau^*=
		-\frac{c(\theta_2)-c(\theta_1)}{y(\theta_2)-y(\theta_1)}\,\abs{f}.
		\]
		In the binary case, the assumption that higher states are more likely after higher signal realizations is equivalent to
		\[
		f(\theta_2,s_2)f(\theta_1,s_1)>f(\theta_1,s_2)f(\theta_2,s_1).
		\]
		Hence, the sign of $\tau^*$ is determined by the sign of
		\[
		-\frac{c(\theta_2)-c(\theta_1)}{y(\theta_2)-y(\theta_1)}
		=
		-\frac{(\theta_2-\theta_1)-(y(\theta_2)-y(\theta_1))}{y(\theta_2)-y(\theta_1)}.
		\]
		It follows that
		\begin{itemize}
			\item if $\theta_2-\theta_1=y(\theta_2)-y(\theta_1)$, then $\tau^*=0$ and the optimal agent is well-calibrated;
			\item if $\theta_2-\theta_1>y(\theta_2)-y(\theta_1)$, then $\tau^*>0$ and the optimal agent is overconfident;
			\item if $\theta_2-\theta_1<y(\theta_2)-y(\theta_1)$, then $\tau^*<0$ and the optimal agent is underconfident.
		\end{itemize}
		If $\tau^*\notin[\underline{\tau},\bar{\tau}]$, strict concavity implies that the optimum lies at a boundary point, which preserves the same sign and therefore the same classification.
		
	\end{proof}
	
	\begin{proof}[Proof of \autoref{prop:nxm1}]
		If $\E_f[c(\tilde{\theta})\mid \tilde{s}]$ is constant, then the well-calibrated agent is optimal by the payoff decomposition in \autoref{subsec:rational}.
		
		Conversely, suppose $\E_f[c(\tilde{\theta})\mid \tilde{s}]$ is not constant. Then there exist signals $s_l$ and $s_{l'}$ such that
		\[
		\E_f[c(\tilde{\theta})\mid s_{l'}]>\E_f[c(\tilde{\theta})\mid s_l].
		\]
		Fix any two states $\theta_k<\theta_{k'}$. For $\tau\in\mathbb{R}$, define $g_\tau\in\Delta(\Theta\times S)$ by
		\[
		g_\tau(\theta,s)
		=
		\begin{cases}
			f(\theta,s)+\tau,
			& \text{if } (\theta,s)\in\{(\theta_{k'},s_{l'}),(\theta_k,s_l)\},\\[4pt]
			f(\theta,s)-\tau,
			& \text{if } (\theta,s)\in\{(\theta_k,s_{l'}),(\theta_{k'},s_l)\},\\[4pt]
			f(\theta,s),
			& \text{otherwise.}
		\end{cases}
		\]
		
		This perturbation preserves both marginals: for every $\theta$,
		$\sum_{s} g_\tau(\theta,s)=\sum_{s} f(\theta,s)$, and for every $s$,
		$\sum_{\theta} g_\tau(\theta,s)=\sum_{\theta} f(\theta,s)$.
		Hence $g_\tau\in\mathcal{G}$ whenever it is nonnegative.
		
		Let
		\begin{align*}
			\underline{\tau}
			&:=
			-\min\Bigl\{
			f(\theta_{k'},s_{l'}),\ f(\theta_k,s_l),\ 1-f(\theta_k,s_{l'}),\ 1-f(\theta_{k'},s_l)
			\Bigr\},\\
			\bar{\tau}
			&:=
			\min\Bigl\{
			1-f(\theta_{k'},s_{l'}),\ 1-f(\theta_k,s_l),\ f(\theta_k,s_{l'}),\ f(\theta_{k'},s_l)
			\Bigr\}.
		\end{align*}
		Then $g_\tau\ge 0$ for all $\tau\in[\underline{\tau},\bar{\tau}]$.
		By full support, $\underline{\tau}<0<\bar{\tau}$, so this interval contains a neighborhood of $0$.
		
		Under $g_\tau$, posterior beliefs are unchanged for all signals except $s_l$ and $s_{l'}$, and the induced action satisfies
		\[
		\E_{g_\tau}[y(\tilde{\theta})\mid s]
		=
		\begin{cases}
			\E_f[y(\tilde{\theta})\mid s]-\tau\,\dfrac{y(\theta_{k'})-y(\theta_k)}{f_S(s_l)},
			& \text{if } s=s_l,\\[10pt]
			\E_f[y(\tilde{\theta})\mid s]+\tau\,\dfrac{y(\theta_{k'})-y(\theta_k)}{f_S(s_{l'})},
			& \text{if } s=s_{l'},\\[10pt]
			\E_f[y(\tilde{\theta})\mid s],
			& \text{otherwise.}
		\end{cases}
		\]
		A direct substitution into the quadratic-loss objective yields
		\begin{align*}
			U(g_\tau)
			=
			U(f)
			&-\tau^2\bigl(y(\theta_{k'})-y(\theta_k)\bigr)^2
			\left(\frac{1}{f_S(s_{l'})}+\frac{1}{f_S(s_l)}\right)\\
			&-2\tau\bigl(y(\theta_{k'})-y(\theta_k)\bigr)
			\Bigl(\E_f[c(\tilde{\theta})\mid s_{l'}]-\E_f[c(\tilde{\theta})\mid s_l]\Bigr).
		\end{align*}
		Therefore,
		\[
		\frac{\partial U(g_\tau)}{\partial \tau}\Big|_{\tau=0}
		=
		-2\bigl(y(\theta_{k'})-y(\theta_k)\bigr)
		\Bigl(\E_f[c(\tilde{\theta})\mid s_{l'}]-\E_f[c(\tilde{\theta})\mid s_l]\Bigr).
		\]
		Since $y$ is strictly increasing, $y(\theta_{k'})-y(\theta_k)>0$, and by assumption
		$\E_f[c(\tilde{\theta})\mid s_{l'}]-\E_f[c(\tilde{\theta})\mid s_l]>0$.
		Hence the derivative at $0$ is nonzero, so for $\tau$ of the appropriate sign and sufficiently small (thus feasible) $U(g_\tau)>U(f)$ holds.
		It follows that $f$ (the well-calibrated agent) cannot be optimal whenever $\E_f[c(\tilde{\theta})\mid \tilde{s}]$ is not constant.
	\end{proof}

	\begin{proof}[Proof of \autoref{prop:nxm2}]
	I begin with the observation that any joint distribution $g\in\mathcal{G}$ can be reexpressed using a matrix $t$ of \textit{elementary transformations} and two fixed canonical operators. Formally, let $D_k\in\R^{(k-1)\times k}$ denote the canonical \textit{discrete first-derivative operator}.\footnote{Formally, if $I_k$ denotes the $k\times k$ identity matrix, define
		\[
		D_k:=\begin{bmatrix} I_{k-1} & 0 \end{bmatrix}-\begin{bmatrix} 0 & I_{k-1} \end{bmatrix}.
		\]
		For any vector $x=(x_1,\ldots,x_k)^\top\in\R^k$, we have
		\[
		D_kx=(x_1-x_2,\;x_2-x_3,\;\ldots,\;x_{k-1}-x_k)^\top.
		\]
		Thus, $D_k$ computes consecutive first differences. It coincides with the forward finite-difference matrix in numerical analysis, the first-differencing operator in time-series econometrics, and the incidence (discrete gradient) matrix of a path graph in graph theory.}
	Let $t\in\R^{(n-1)\times(m-1)}$ be a matrix whose typical entry $t_{ij}$ corresponds to a local reallocation of probability mass between adjacent states $(\theta_i,\theta_{i+1})$ and adjacent signals $(s_j,s_{j+1})$. Then any $g\in\mathcal{G}$ can be written as
	\[
	g=f+D_n^\top t D_m,
	\]
	for some (not necessarily unique) matrix $t\in\R^{(n-1)\times(m-1)}$.
	
	Note that each $t_{kl}$ affects posterior beliefs only after signal realizations $s_l$ and $s_{l+1}$. Moreover, $t_{kl}>0$ moves probability mass from $\theta_{k+1}$ to $\theta_k$ after $s_l$, while the opposite happens after $s_{l+1}$.
	
	Let $\delta(t,s_j)$ denote the difference between the agent's action after signal realization $s_j$ when beliefs are induced by $t$ and the action of the well-calibrated agent:
	\begin{align*}
		\delta(t,s_j)
		:=&\E_g[y(\tilde{\theta})\mid s_j]-\E_f[y(\tilde{\theta})\mid \tilde{s}=s_j]=\frac{1}{f_S(s_j)}\sum_{i=1}^{n-1}\bigl[y(\theta_{i+1})-y(\theta_i)\bigr]\bigl[t_{i,j-1}-t_{ij}\bigr],
	\end{align*}
	with the convention $t_{0j}=t_{i0}=t_{nj}=t_{im}=0$. In particular, since $t_{i0}=t_{im}=0$,
	\begin{align*}
		\delta(t,s_1)&=-\frac{1}{f_S(s_1)}\sum_{i=1}^{n-1}\bigl[y(\theta_{i+1})-y(\theta_i)\bigr]t_{i1}, \ \text{and} \ 
		\delta(t,s_m)&=\frac{1}{f_S(s_m)}\sum_{i=1}^{n-1}\bigl[y(\theta_{i+1})-y(\theta_i)\bigr]t_{i,m-1}.
	\end{align*}
	Consequently,
	\begin{align*}
		\frac{\partial \delta(t,s_j)}{\partial t_{kl}}
		=
		\left\{
		\begin{array}{lll}
			-\dfrac{y(\theta_{k+1})-y(\theta_k)}{f_S(s_l)}, & \text{if} & j=l,\\[8pt]
			\phantom{-}\dfrac{y(\theta_{k+1})-y(\theta_k)}{f_S(s_{l+1})}, & \text{if} & j=l+1,\\[8pt]
			0, & & \text{otherwise.}
		\end{array}
		\right.
	\end{align*}
	
	On the other hand, $g\in\mathcal{G}$ if and only if
	\[
	t_{i-1,j-1}-t_{i-1,j}-t_{i,j-1}+t_{ij}\in\bigl[-f(\theta_i,s_j),\,1-f(\theta_i,s_j)\bigr]
	\]
	for all $i\in\{1,\ldots,n\}$ and $j\in\{1,\ldots,m\}$, where again $t_{0j}=t_{i0}=t_{nj}=t_{im}=0$. Let $\mathcal{T}$ denote the set of feasible transformations. The full-support assumption guarantees that the interior of $\mathcal{T}$, denoted $\mathcal{T}^\circ$, is nonempty (in particular, the $(n-1)\times(m-1)$ zero matrix belongs to $\mathcal{T}^\circ$).
	
	Therefore, the belief design problem corresponds to choosing $t\in\mathcal{T}$ to maximize $U\!\left(f+D_n^\top t D_m\right)$. Since $U\!\left(f+D_n^\top t D_m\right)$ is continuous in $t$ and $\mathcal{T}$ is compact, a solution exists.
	
	Note that $g\succeq f$ if and only if $t\geq 0$, i.e., $t_{ij}\geq 0$ for all $i\in\{1,\ldots,n-1\}$ and $j\in\{1,\ldots,m-1\}$.\footnote{For a more general version and proof of this statement, see \cite{tchen1980inequalities} or \cite{epstein1980increasing}.} Hence, if some optimal $t^*\gneqq 0$ satisfies $t^*\ge 0$, I say that the optimal agent is \textit{overconfident}.
	
	I decompose the objective function as
	\[
	U(g)
	=
	-\E_f\!\left[\delta(t,\tilde{s})^2\right]
	-2\E_f\!\left[\delta(t,\tilde{s})\,\E_f\!\left[c(\tilde{\theta})\mid \tilde{s}\right]\right]
	+U(f).
	\]
	In particular, the objective depends on $t$ only through the induced function $\delta(t,\cdot)$. Since $\delta(t,\cdot)$ is linear in $t$, the set of feasible $\delta$'s is convex (because $\mathcal{T}$ is convex), and the objective is a strictly concave quadratic function of $\delta(t,\cdot)$, the maximization problem admits a unique optimizer in terms of $\delta$. Consequently, all optimal transformations induce the same function $\delta(\cdot)$ and hence the same behavior.
	
	Differentiating yields
	\begin{align*}
		\frac{\partial U(f+D_n^\top t D_m)}{\partial t_{kl}}
		=
		2\bigl[y(\theta_{k+1})-y(\theta_k)\bigr]
		\Bigl[\delta(t,s_l)+\E_f[c(\tilde{\theta})\mid s_l]-\delta(t,s_{l+1})-\E_f[c(\tilde{\theta})\mid s_{l+1}]\Bigr].
	\end{align*}
	Since $y(\theta_{k+1})>y(\theta_k)$, any interior optimum $t^*\in\mathcal{T}^\circ$ must satisfy, for each $l\in\{1,\ldots,m-1\}$,
	\begin{align}\label{eq:FOCdelta}
		\delta(t^*,s_l)+\E_f[c(\tilde{\theta})\mid s_l]
		=
		\delta(t^*,s_{l+1})+\E_f[c(\tilde{\theta})\mid s_{l+1}].
	\end{align}
	\autoref{eq:FOCdelta} implies that $\delta(t^*,s_l)+\E_f[c(\tilde{\theta})\mid s_l]$ is constant in $l$. Moreover, $\E_f[\delta(t^*,\tilde{s})]=0$ implies that this constant equals $\E_f[c(\tilde{\theta})]$. Hence, for all $l\in\{1,\ldots,m\}$,
	\[
	\delta(t^*,s_l)
	=
	\E_f[c(\tilde{\theta})]-\E_f[c(\tilde{\theta})\mid s_l].
	\]
	Using $c(\theta)=y(\theta)-\theta$, I obtain
	\[
	\E_{g^*}[y(\tilde{\theta})\mid s_l]
	=
	\E_f[c(\tilde{\theta})]+\E_f[\tilde{\theta}\mid s_l],
	\]
	i.e., the optimal agent behaves as if the conflict of interest were constant.
	
	Substituting the expression for $\delta(t,s_l)$ into $\delta(t^*,s_l)=\E_f[c(\tilde{\theta})]-\E_f[c(\tilde{\theta})\mid s_l]$ yields, for each $l$,
	\begin{align}\label{eq:deltaeq}
		\frac{1}{f_S(s_l)}\sum_{i=1}^{n-1}\bigl[y(\theta_{i+1})-y(\theta_i)\bigr]\bigl[t^*_{i,l-1}-t^*_{il}\bigr]
		=
		\E_f[c(\tilde{\theta})]-\E_f[c(\tilde{\theta})\mid s_l].
	\end{align}
	Iterating \autoref{eq:deltaeq} gives, for all $l\in\{1,\ldots,m-1\}$,
	\begin{align}\label{eq:sumconstraint}
		\sum_{i=1}^{n-1}\bigl[y(\theta_{i+1})-y(\theta_i)\bigr]t^*_{il}
		=
		\sum_{j=1}^{l} f_S(s_j)\Bigl(\E_f[c(\tilde{\theta})\mid s_j]-\E_f[c(\tilde{\theta})]\Bigr).
	\end{align}
	
	Now consider the matrix of transformations $t^\phi\in\R^{(n-1)\times(m-1)}$ with typical entry
	\[
	t^\phi_{kl}
	=
	\frac{\sum_{j=1}^{l} f_S(s_j)\bigl(\E_f[c(\tilde{\theta})\mid s_j]-\E_f[c(\tilde{\theta})]\bigr)}{y(\theta_{k+1})-y(\theta_k)}\,\phi_k,
	\]
	where $\phi=(\phi_1,\ldots,\phi_{n-1})\in\Delta^{n-1}$.\footnote{$\Delta^{a}$ denotes the $a$-dimensional simplex, i.e., the set of $a$-vectors satisfying $\phi_i\geq 0$ and $\sum_{i=1}^{a}\phi_i=1$.}
	By construction, $t^\phi$ satisfies \autoref{eq:sumconstraint}. Under the assumptions that $\E_f[c(\tilde{\theta})\mid s]$ is strictly decreasing in $s$ and that $y(\cdot)$ is strictly increasing, we have $t^\phi_{kl}\geq 0$. Thus, if $t^\phi\in\mathcal{T}$ for some $\phi\in\Delta^{n-1}$, then it solves the belief design problem and I conclude that there exists an overconfident optimal agent. By uniqueness of the induced $\delta(\cdot)$, any optimal agent acts as such.
	
	Let $\alpha:=\max_{s\in S}\bigl|\E_f[c(\tilde{\theta})\mid s]-\E_f[c(\tilde{\theta})]\bigr|$. Then
	\[
	|t^\phi_{kl}|
	=
	\left|
	\frac{\sum_{j=1}^{l} f_S(s_j)\bigl(\E_f[c(\tilde{\theta})\mid s_j]-\E_f[c(\tilde{\theta})]\bigr)}{y(\theta_{k+1})-y(\theta_k)}\,\phi_k
	\right|
	\leq
	\alpha\,\frac{\phi_k}{y(\theta_{k+1})-y(\theta_k)}.
	\]
	It follows that for all $i\in\{2,\ldots,n-1\}$ (and similarly at the boundaries using the convention $t_{0j}=t_{nj}=0$),
	\begin{align*}
		t^\phi_{i-1,j-1}-t^\phi_{i-1,j}-t^\phi_{i,j-1}+t^\phi_{ij}
		\geq
		-t^\phi_{i-1,j}-t^\phi_{i,j-1}
		\geq
		-\alpha\left[\frac{\phi_i}{y(\theta_{i+1})-y(\theta_i)}+\frac{\phi_{i-1}}{y(\theta_i)-y(\theta_{i-1})}\right]
	\end{align*}
	and
	\begin{align*}
		t^\phi_{i-1,j-1}-t^\phi_{i-1,j}-t^\phi_{i,j-1}+t^\phi_{ij}
		\leq
		t^\phi_{i-1,j-1}+t^\phi_{ij}
		\leq
		\alpha\left[\frac{\phi_i}{y(\theta_{i+1})-y(\theta_i)}+\frac{\phi_{i-1}}{y(\theta_i)-y(\theta_{i-1})}\right].
	\end{align*}
	Therefore, for each $\phi\in\Delta^{n-1}$ there exists $\bar{\alpha}(\phi)>0$ such that $\alpha<\bar{\alpha}(\phi)$ implies $t^\phi\in\mathcal{T}$. Let $\bar{\alpha}:=\sup_{\phi\in\Delta^{n-1}}\bar{\alpha}(\phi)>0$. Hence, whenever $\alpha<\bar{\alpha}$ there exists some $\phi\in\Delta^{n-1}$ such that $t^\phi\in\mathcal{T}$, which proves the first part of the proposition.
	
	The second part follows by symmetry: if
	\[
	\E_f[y(\tilde{\theta})\mid s']-\E_f[y(\tilde{\theta})\mid s]
	>
	\E_f[\tilde{\theta}\mid s']-\E_f[\tilde{\theta}\mid s]
	\quad \text{for all } s'>s,
	\]
	then the same construction with $t^\phi\leq 0$ yields an optimal transformation corresponding to an underconfident agent; by uniqueness of the induced $\delta(\cdot)$, any optimal agent acts as such.
	\end{proof}

	\begin{proof}[Proof of \autoref{prop:transfers}]
	The recommended actions must be incentive compatible given promised transfers and the agent's level of confidence. After each signal realization, two deviations are relevant: $(i)$ deviation to the other recommended action, and $(ii)$ deviation to the best action among those that are not recommended. The best deviation to an action yielding no transfer is to choose $\E_\tau[y(\tilde{\theta})\mid s_i]$ after signal $s_i$, which yields expected payoff $-Var_\tau(y(\tilde{\theta})\mid s_i)$.
		
		Let $(x_1,x_2)$ be the recommended actions and $(w_1,w_2)$ the associated transfers (paid only when the corresponding action is chosen). The four incentive compatibility constraints are
		\begin{align*}
			-\E_\tau[(x_1-y(\tilde{\theta}))^2\mid s_1]+w_1
			&\geq
			-\E_\tau[(x_2-y(\tilde{\theta}))^2\mid s_1]+w_2,\\
			-\E_\tau[(x_1-y(\tilde{\theta}))^2\mid s_1]+w_1
			&\geq
			-Var_\tau(y(\tilde{\theta})\mid s_1),\\
			-\E_\tau[(x_2-y(\tilde{\theta}))^2\mid s_2]+w_2
			&\geq
			-\E_\tau[(x_1-y(\tilde{\theta}))^2\mid s_2]+w_1,\\
			-\E_\tau[(x_2-y(\tilde{\theta}))^2\mid s_2]+w_2
			&\geq
			-Var_\tau(y(\tilde{\theta})\mid s_2).
		\end{align*}
		The principal's problem is therefore
		\[
		\max_{\{x_1,x_2,w_1,w_2,\tau\}}
		-\sum_{i=1}^2 f_S(s_i)\Bigl(\E_f[(x_i-\tilde{\theta})^2\mid s_i]+w_i\Bigr),
		\]
		subject to the above constraints.
		
		Let $\mu_i:=\E_\tau[y(\tilde{\theta})\mid s_i]$ and $\bar\mu := (\mu_1+\mu_2)/2$. Using
		\[
		\E_\tau[(x-y(\tilde{\theta}))^2\mid s_i]
		=
		Var_\tau(y(\tilde{\theta})\mid s_i) + (x-\mu_i)^2,
		\]
		the constraints can be rewritten as
		\begin{align}
			w_2-w_1
			&\le
			x_2^2-x_1^2 - 2(x_2-x_1)\mu_1,\label{eq:ic1}\\
			w_1
			&\ge
			(x_1-\mu_1)^2,\label{eq:ic2}\\
			w_2-w_1
			&\ge
			x_2^2-x_1^2 - 2(x_2-x_1)\mu_2,\label{eq:ic3}\\
			w_2
			&\ge
			(x_2-\mu_2)^2.\label{eq:ic4}
		\end{align}
		
		\medskip
		
		\noindent\textbf{Step 1 (monotonicity of recommended actions).}
		A necessary condition for \eqref{eq:ic1} and \eqref{eq:ic3} to be simultaneously satisfied is
		\[
		(x_2-x_1)(\mu_2-\mu_1)\ge 0.
		\]
		In any optimum $x_2\ge x_1$ must hold: otherwise the principal can replace $(x_1,x_2)$ by a constant action (e.g., their average), weakly reducing the expected losses by convexity, and (because the two recommended actions collapse) weakly relax the incentive constraints, allowing weakly lower transfers. Hence, in an optimum $x_2\ge x_1$ and therefore we also require $\mu_2\ge \mu_1$.
		
		\medskip
		
		\noindent\textbf{Step 2 (binding constraints).}
		At an optimum, at least one of \eqref{eq:ic2} and \eqref{eq:ic4} must bind; otherwise the principal can reduce both $w_1$ and $w_2$ by a small common amount without violating \eqref{eq:ic1}--\eqref{eq:ic4}. Moreover, at least two constraints bind: if only one binds, one can lower one of the wages without affecting feasibility.
		
		The binding pattern depends on the location of $(x_1,x_2)$ relative to $\bar\mu$. The three cases are:
\begin{table}[h]
	\centering
	\renewcommand{\arraystretch}{1.15}
	\setlength{\tabcolsep}{6pt}
	
	\begin{threeparttable}
		\resizebox{\textwidth}{!}{%
			\begin{tabular}{lccc}
				\toprule
				Case & 1 & 2 & 3 \\
				\midrule
				\makecell[l]{Binding\\constraints}
				& \makecell{\ref{eq:ic2} and \ref{eq:ic3}}
				& \makecell{\ref{eq:ic2} and \ref{eq:ic4}}
				& \makecell{\ref{eq:ic1} and \ref{eq:ic4}} \\
				\addlinespace[6pt]
				$w_1$
				& $\left(x_1-\mu_1\right)^2$
				& $\left(x_1-\mu_1\right)^2$
				& $\left(x_1-\mu_1\right)^2+2\left(\mu_2-\mu_1\right)\left(\bar{\mu}-x_2\right)$ \\
				\addlinespace[6pt]
				$w_2$
				& $\left(x_2-\mu_2\right)^2+2\left(\mu_2-\mu_1\right)\left(x_1-\bar{\mu}\right)$
				& $\left(x_2-\mu_2\right)^2$
				& $\left(x_2-\mu_2\right)^2$ \\
				\addlinespace[6pt]
				Restriction
				& $x_2\geq x_1>\bar{\mu}$
				& $x_1\leq \bar{\mu}\leq x_2$
				& $x_1\leq x_2<\bar{\mu}$ \\
				\bottomrule
		\end{tabular}}
	\end{threeparttable}
\end{table}

		\medskip
		
		Since beliefs enter the objective only through the induced wages, fix $(x_1,x_2)$ and define expected transfers
		\[
		W(x_1,x_2,\mu_1):=f_S(s_1)w_1+f_S(s_2)w_2.
		\]
		I first choose beliefs (equivalently, $(\mu_1,\mu_2)$) to minimize $W$ for given $(x_1,x_2)$, and then optimize over $(x_1,x_2)$.
		
		\medskip
		\noindent\textbf{Remark (Restriction to Case 2).}
		The analysis below focuses on Case~2, in which the average of the agent’s conditional preferred actions lies between the recommended actions. This restriction is without loss for Proposition~4. Cases~1 and~3 arise only when the average conflict of interest is sufficiently large in absolute value, in which case implementing the optimal action profile requires transfers that vary with the signal realization. Under the maintained assumption
		\[
		\bigl|\E_f[c(\tilde{\theta})]\bigr|
		\le
		\E_f[\tilde{\theta}\mid s_2]-\E_f[\tilde{\theta}\mid s_1],
		\]
		those cases are ruled out ex ante: the incentive constraints can be satisfied with flat transfers, and the optimal solution necessarily lies in Case~2. Since Proposition~4 is concerned precisely with this parameter region, it is sufficient to characterize the solution in Case~2.
		\medskip

		\medskip
		
		\noindent\textbf{Step 3 (Case 2 and the main solution).}
		Consider Case 2, i.e., $x_1\le \bar\mu\le x_2$, so that \eqref{eq:ic2} and \eqref{eq:ic4} bind and
		\[
		W(x_1,x_2,\mu_1)=f_S(s_1)(x_1-\mu_1)^2+f_S(s_2)(x_2-\mu_2)^2.
		\]
		Because the marginals of the joint distribution are fixed, $\E_f[y(\tilde{\theta})]$ is fixed. Hence
		\[
		\E_f[y(\tilde{\theta})]=f_S(s_1)\mu_1+f_S(s_2)\mu_2,
		\qquad\text{so}\qquad
		\mu_2=\frac{\E_f[y(\tilde{\theta})]-f_S(s_1)\mu_1}{f_S(s_2)}.
		\]
		Substituting into $W$ and differentiating yields
		\begin{align*}
			\frac{\partial W(x_1,x_2,\mu_1)}{\partial \mu_1}
			&=
			2f_S(s_1)\Bigl[x_2-x_1-(\mu_2-\mu_1)\Bigr],\\
			\frac{\partial^2 W(x_1,x_2,\mu_1)}{\partial \mu_1^2}
			&=
			2\frac{f_S(s_1)}{f_S(s_2)} \;>\;0.
		\end{align*}
		Therefore the unique minimizer satisfies
		\[
		\mu_2^*-\mu_1^* = x_2-x_1,
		\qquad\text{equivalently}\qquad
		\mu_1^*=\E_f[y(\tilde{\theta})]-f_S(s_2)(x_2-x_1).
		\]
		
		Now consider the principal's choice of $(x_1,x_2)$ given the wage-minimizing beliefs. By the envelope theorem,
		\[
		\frac{\partial W(x_1,x_2,\mu_1^*)}{\partial x_1}=2f_S(s_1)(x_1-\mu_1^*),
		\qquad
		\frac{\partial W(x_1,x_2,\mu_1^*)}{\partial x_2}=2f_S(s_2)(x_2-\mu_2^*).
		\]
		The first-order conditions for the principal's problem are
		\[
		-2f_S(s_i)\bigl(x_i-\E_f[\tilde{\theta}\mid s_i]\bigr)
		=
		\frac{\partial W(x_1,x_2,\mu_1^*)}{\partial x_i},
		\qquad i\in\{1,2\},
		\]
		which simplify to
		\[
		-(x_i-\E_f[\tilde{\theta}\mid s_i]) = x_i-\mu_i^*,
		\qquad i\in\{1,2\}.
		\]
		Thus,
		\[
		x_i^*=\frac{\mu_i^*+\E_f[\tilde{\theta}\mid s_i]}{2},
		\qquad i\in\{1,2\}.
		\]
		Using $\mu_i^*=\E_f[y(\tilde{\theta})\mid s_i]$ and $c(\theta)=y(\theta)-\theta$, I obtain
		\[
		\mu_i^*=\E_f[c(\tilde{\theta})]+\E_f[\tilde{\theta}\mid s_i],
		\qquad
		x_i^*=\frac{\E_f[c(\tilde{\theta})]}{2}+\E_f[\tilde{\theta}\mid s_i],
		\qquad i\in\{1,2\}.
		\]
		Moreover, the Case 2 restriction $x_1^*\le \bar\mu^*\le x_2^*$ is equivalent to
		\[
		\left|\E_f[c(\tilde{\theta})]\right|
		\le
		\E_f[\tilde{\theta}\mid s_2]-\E_f[\tilde{\theta}\mid s_1],
		\]
		which is exactly the assumption in the proposition. Hence, under the stated condition, the optimum indeed falls in Case 2.
		
		\medskip
		
		\noindent\textbf{Step 4 (concavity/verification).}
		After substituting the wage-minimizing beliefs, the principal's objective is a quadratic function of $(x_1,x_2)$. In Case 2, the second derivatives of $W$ are
		\begin{align*}
				\frac{\partial^2 W}{\partial x_1^2}&=2f_S(s_1)\bigl(1-f_S(s_2)\bigr)=2f_S(s_1)^2,\\
				\frac{\partial^2 W}{\partial x_2^2}&=2f_S(s_2)\bigl(1-f_S(s_1)\bigr)=2f_S(s_2)^2,\\
				\frac{\partial^2 W}{\partial x_1\partial x_2}&=2f_S(s_1)f_S(s_2).
		\end{align*}
	
		Therefore the Hessian of the objective is negative definite, and the solution above is the unique maximizer.
		
		\medskip
		
		\noindent Finally, since \eqref{eq:ic2} and \eqref{eq:ic4} bind in Case 2, the optimal wages are
		\[
		w_1^*=(x_1^*-\mu_1^*)^2, \qquad w_2^*=(x_2^*-\mu_2^*)^2,
		\]
		and with the expressions above one has $x_i^*-\mu_i^*=-\E_f[c(\tilde{\theta})]/2$, hence
		\[
		w_1^*=w_2^*=\frac{1}{4}\E_f[c(\tilde{\theta})]^2.
		\]
			
	\end{proof}
	
	Before proving \autoref{prop:delegation}, I introduce the following lemma.
\begin{lemma}
	\label{lemma:vars}
	Let $Var_f(\tilde{s}):=Var_f(\E_f[\tilde{\theta}\mid \tilde{s}])$. Then
	\[
	Var_f(\tilde{s})=\frac{\abs{f}^2}{f_S(s_1)f_S(s_2)}(\theta_2-\theta_1)^2.
	\]
\end{lemma}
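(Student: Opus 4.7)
The plan is to exploit the fact that with binary signals, $\E_f[\tilde{\theta}\mid\tilde{s}]$ is itself a two-valued random variable, whose variance admits a closed form in terms of the two conditional means and the marginal probabilities.

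First, I would record that for any random variable taking value $a$ with probability $q$ and value $b$ with probability $1-q$, the variance equals $q(1-q)(b-a)^2$. Applying this with $q=f_S(s_1)$, $a=\E_f[\tilde{\theta}\mid s_1]$, and $b=\E_f[\tilde{\theta}\mid s_2]$ reduces the claim to showing that
\[
\E_f[\tilde{\theta}\mid s_2]-\E_f[\tilde{\theta}\mid s_1]=\frac{\abs{f}}{f_S(s_1)f_S(s_2)}(\theta_2-\theta_1).
\]

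To verify this identity, I would use Bayes' rule to write each conditional expectation as
\[
\E_f[\tilde{\theta}\mid s_j]=\theta_1\,\frac{f(\theta_1,s_j)}{f_S(s_j)}+\theta_2\,\frac{f(\theta_2,s_j)}{f_S(s_j)}=\theta_1+\frac{f(\theta_2,s_j)}{f_S(s_j)}(\theta_2-\theta_1),
\]
subtract the two expressions, and put them over the common denominator $f_S(s_1)f_S(s_2)$. Expanding the numerator $f(\theta_2,s_2)f_S(s_1)-f(\theta_2,s_1)f_S(s_2)$ using $f_S(s_j)=f(\theta_1,s_j)+f(\theta_2,s_j)$, the diagonal terms $f(\theta_2,s_2)f(\theta_2,s_1)$ cancel, leaving exactly $f(\theta_1,s_1)f(\theta_2,s_2)-f(\theta_1,s_2)f(\theta_2,s_1)=\abs{f}$.

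Substituting back into the two-point variance formula yields
\[
Var_f(\tilde{s})=f_S(s_1)f_S(s_2)\left(\frac{\abs{f}}{f_S(s_1)f_S(s_2)}\right)^{\!2}(\theta_2-\theta_1)^2=\frac{\abs{f}^2}{f_S(s_1)f_S(s_2)}(\theta_2-\theta_1)^2,
\]
which is the claimed expression. The whole argument is elementary; the only mildly delicate step is the algebraic simplification of the numerator, where one needs to be careful to use the marginal identity $f_S(s_j)=f(\theta_1,s_j)+f(\theta_2,s_j)$ on the right side to obtain the determinant in a clean form.
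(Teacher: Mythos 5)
Your proof is correct and follows essentially the same route as the paper: both reduce the claim to the two-point variance formula $f_S(s_1)f_S(s_2)\bigl(\E_f[\tilde{\theta}\mid s_2]-\E_f[\tilde{\theta}\mid s_1]\bigr)^2$ and then verify that the gap between the conditional means equals $\abs{f}(\theta_2-\theta_1)/\bigl(f_S(s_1)f_S(s_2)\bigr)$. The only difference is that you spell out the algebraic simplification of the numerator that the paper dismisses as ``a direct calculation,'' and that step is carried out correctly.
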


	\begin{proof}[Proof of \autoref{lemma:vars}]
		Since $\tilde{s}$ is binary, we have
		\[
		Var_f(\tilde{s})
		=
		Var_f\!\big(\E_f[\tilde{\theta}\mid \tilde{s}]\big)
		=
		f_S(s_1)f_S(s_2)\bigl(\E_f[\tilde{\theta}\mid s_2]-\E_f[\tilde{\theta}\mid s_1]\bigr)^2.
		\]
		A direct calculation yields
		\[
		\E_f[\tilde{\theta}\mid s_2]-\E_f[\tilde{\theta}\mid s_1]
		=
		\frac{\abs{f}}{f_S(s_1)f_S(s_2)}(\theta_2-\theta_1),
		\]
		which implies
		\[
		Var_f(\tilde{s})
		=
		\frac{\abs{f}^2}{f_S(s_1)f_S(s_2)}(\theta_2-\theta_1)^2.
		\]
	\end{proof}

	\begin{proof}[Proof of \autoref{prop:delegation}]
	
		Delegation to an agent with beliefs $g^*$ is optimal if and only if
		\begin{align*}
			U(g^*)=-Var_f(\E_{g^*}[y(\tilde{\theta})|\tilde{s}] -\E_f[\tilde{\theta}|\tilde{s}])-\E_f[c(\tilde{\theta})]^2-\E_f[Var_f(\tilde{\theta}|\tilde{s}) ]\geq -Var_f(\tilde{\theta}).
		\end{align*}
		In any interior optimum for the belief choice we have
		\[
		Var_f(\E_{g^*}[y(\tilde{\theta})\mid \tilde{s}] - \E_f[\tilde{\theta}\mid \tilde{s}])=0,
		\]
		since otherwise a marginal adjustment of beliefs would strictly increase the principal’s payoff.  Therefore, this condition is equivalent to
		\begin{align*}
			\E_f[c(\tilde{\theta})]^2\leq Var_f(\tilde{\theta})-\E_f[Var_f(\tilde{\theta}|\tilde{s}) ]=Var_f(\tilde{s}).
		\end{align*}
		Using Lemma~\ref{lemma:vars}, this condition becomes
		\[
		\E_f[c(\tilde{\theta})]^2
		\le
		\frac{\abs{f}^2}{f_S(s_1)f_S(s_2)}(\theta_2-\theta_1)^2.
		\]
		Rearranging yields
		\[
		\frac{f_S(s_1)f_S(s_2)\E_f[c(\tilde{\theta})]^2}
		{\abs{f}(y(\theta_2)-y(\theta_1))(\theta_2-\theta_1)}
		-
		\abs{f}
		\le
		\Bigl(\frac{\theta_2-\theta_1}{y(\theta_2)-y(\theta_1)}-1\Bigr)\abs{f}
		=
		\tau^*,
		\]
		where $\tau^*$ is defined in the proof of \autoref{prop:2x2}.
	\end{proof}
	
	\section{Truth-or-noise information structure}
	\label{sec:truth}
	
	In this appendix, I analyze the baseline model from \autoref{sec:model} under an alternative
	information structure that is widely used in applied work. The goal is to
	illustrate the main mechanism of the paper in a continuous-state environment
	where confidence can be captured by a single parameter.
	
	\subsection*{Setting}
	
	The state $\theta$ lies in a compact interval
	$\Theta=[\underline{\theta},\bar{\theta}]\subset\mathbb{R}$. The agent observes a
	real-valued signal $s\in\mathbb{R}$. With probability $\rho\in(0,1)$, the signal
	perfectly reveals the state, so that $s=\theta$. With complementary probability
	$1-\rho$, the signal is uninformative and equals an independent draw from the
	state distribution.
	
	Let $F_\Theta$ denote the marginal distribution of the state, which is assumed
	to have full support and a strictly positive density. The joint distribution of
	$(\tilde{\theta},\tilde{s})$ induced by this information structure is denoted by
	$F$.
	
	The parameter $\rho$ represents the \textit{true} precision of the signal. The agent may
	misperceive this precision. Specifically, he acts \textit{as if} the signal reveals the
	state with probability $\rho+\kappa$, where
	$\kappa\in[-\rho,1-\rho]$. I refer to $\kappa$ as the agent's \textit{level of confidence}.
	When $\kappa>0$ the agent places more weight on the signal than is warranted,
	when $\kappa<0$ he places less weight on the signal, and when $\kappa=0$ he is
	well-calibrated.
	
	As in \autoref{sec:model}, the principal can select the agent based on his beliefs. The
	timing is as follows:
	\begin{enumerate}
		\item The principal chooses an agent with a level of confidence $\kappa$.
		\item Nature draws $(\tilde{\theta},\tilde{s})$ according to $F$.
		\item The agent observes $\tilde{s}$ and chooses an action $x\in\mathbb{R}$.
	\end{enumerate}
	
	Ex-post payoffs are quadratic. The principal's payoff is $-(x-\theta)^2$ and the agent's
	payoff is $-(x-y(\theta))^2$. Define the conflict of interest by
	$c(\theta):=y(\theta)-\theta$.
	
	\subsection*{Optimal level of confidence}
	
	The next result characterizes the principal's optimal choice of confidence in
	this environment.
	
	\begin{proposition}
		\label{prop:truth-noise}
		The unique optimal agent is:
		\begin{enumerate}[(i)]
			\item underconfident if and only if $y(\tilde{\theta})$ and $c(\tilde{\theta})$
			are positively correlated;
			\item well-calibrated if and only if $y(\tilde{\theta})$ and $c(\tilde{\theta})$
			are uncorrelated;
			\item overconfident if and only if $y(\tilde{\theta})$ and $c(\tilde{\theta})$
			are negatively correlated.
		\end{enumerate}
	\end{proposition}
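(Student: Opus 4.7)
The plan is to follow the same recipe as in the binary model of Proposition~1. First, I would compute the agent's optimal action as a linear function of $\kappa$ and the realized signal. Under the agent's perceived truth-or-noise structure with precision $\rho+\kappa$, his posterior places mass $\rho+\kappa$ on $\theta=s$ and mixes with the prior with weight $1-\rho-\kappa$. Hence his optimal action is
\[
x_\kappa(s)=(\rho+\kappa)\,y(s)+(1-\rho-\kappa)\,\E[y(\tilde{\theta})],
\]
which is linear in $\kappa$ and depends on the signal only through $y(s)$.

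Next, I would write the principal's objective as $U(\kappa)=-\E_F[(x_\kappa(\tilde{s})-\tilde{\theta})^2]$ and exploit two identities that follow directly from the truth-or-noise structure: (i) the marginal of $\tilde{s}$ coincides with the marginal of $\tilde{\theta}$, so $Var_F(y(\tilde{s}))=Var(y(\tilde{\theta}))$; and (ii) for any integrable function $h$,
\[
Cov_F(\tilde{\theta},h(\tilde{s}))=\rho\,Cov(\tilde{\theta},h(\tilde{\theta})),
\]
because with probability $\rho$ the signal equals the state and with the complementary probability it is independent of the state. These two identities reduce all relevant moments of $(\tilde{\theta},\tilde{s})$ to moments of $\tilde{\theta}$ alone.

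I would then differentiate $U$ in $\kappa$. The first derivative yields a linear first-order condition, and the second derivative equals $-2\,Var(y(\tilde{\theta}))<0$, so $U$ is strictly concave in $\kappa$ and the optimizer is unique. Solving the first-order condition and using $c(\theta)=y(\theta)-\theta$ gives
\[
\kappa^{*}=-\rho\,\frac{Cov(c(\tilde{\theta}),y(\tilde{\theta}))}{Var(y(\tilde{\theta}))},
\]
so the sign of $\kappa^{*}$ is opposite to the sign of $Cov(c(\tilde{\theta}),y(\tilde{\theta}))$; since correlation has the same sign as covariance, this delivers the three cases of the proposition. If $\kappa^{*}$ falls outside the feasible interval $[-\rho,1-\rho]$, strict concavity still places the constrained optimizer at the nearer boundary, which preserves the classification.

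The main obstacle is bookkeeping rather than conceptual: one must carefully separate the two roles of $\rho$ (the true precision versus the agent's perceived precision $\rho+\kappa$), and apply the covariance identity only to terms arising from the principal's evaluation of the signal, not to terms produced by the agent's own updating. Once the two truth-or-noise identities are in place, the remainder is a direct quadratic optimization that parallels the argument for Proposition~1.
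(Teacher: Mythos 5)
Your proposal is correct and follows essentially the same route as the paper's proof: the same linear expression for $x_\kappa(s)$, the same two truth-or-noise moment identities (equality of marginals and the factor-$\rho$ covariance reduction, which the paper obtains via the law of iterated expectations), the same strict concavity in $\kappa$, and the same closed form $\kappa^*=-\rho\,Cov(c(\tilde{\theta}),y(\tilde{\theta}))/Var(y(\tilde{\theta}))$ determining the sign. The boundary remark is also consistent with the paper's piecewise formula for $\kappa^*(\beta)$.
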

	
	To interpret this result, define
	\[
	\beta :=
	\frac{Cov_F\!\left(y(\tilde{\theta}),c(\tilde{\theta})\right)}
	{Var_F\!\left(y(\tilde{\theta})\right)}.
	\]
	The coefficient $\beta$ measures how the conflict of interest varies with the
	agent's preferred action. When $\beta<0$, higher states are associated with a
	smaller conflict, implying that the agent's preferred action responds less to
	information than the principal's preferred action. In that case, increasing the
	agent's confidence improves incentives by amplifying his response to the signal.
	
	The optimal level of confidence is given by
	\[
	\kappa^*(\beta)=
	\begin{cases}
		-\rho & \text{if } \beta\geq 1,\\
		-\rho\,\beta & \text{if } \beta\in\left(-(1-\rho)/\rho,\,1\right),\\
		1-\rho & \text{if } \beta\leq -(1-\rho)/\rho.
	\end{cases}
	\]
	
	\begin{proof}[Proof of \autoref{prop:truth-noise}]
		Fix a level of confidence $\kappa$. Given a signal realization $s$, the agent
		believes that $s=\theta$ with probability $\rho+\kappa$. His optimal action therefore equals
		\[
		x_\kappa(s)
		=\E_\kappa[y(\tilde{\theta})\mid s]
		=(\rho+\kappa)y(s)+(1-\rho-\kappa)\E_F[y(\tilde{\theta})].
		\]
		
		The principal's expected payoff is
		\[
		U(\kappa)
		:=-\E_F\!\left[(x_\kappa(\tilde{s})-\tilde{\theta})^2\right]=-\E_F[x_\kappa(\tilde{s})^2]
		+2\E_F[x_\kappa(\tilde{s})\tilde{\theta}]
		-\E_F[\tilde{\theta}^2],
		\]
		where the last equality follows from expanding the square.
		
		Since $\tilde{s}$ has the same marginal distribution as $\tilde{\theta}$, we
		have
		\[
		\E_F[x_\kappa(\tilde{s})^2]
		=(\rho+\kappa)^2Var_F(y(\tilde{\theta}))
		+\E_F[y(\tilde{\theta})]^2.
		\]
		Moreover,
		\[
		\E_F[x_\kappa(\tilde{s})\tilde{\theta}]
		=(\rho+\kappa)\E_F[y(\tilde{s})\tilde{\theta}]
		+(1-\rho-\kappa)\E_F[y(\tilde{\theta})]\E_F[\tilde{\theta}].
		\]
		
		By the law of iterated expectations,
		\[
		\E_F[y(\tilde{s})\tilde{\theta}]
		=\rho\, Cov_F(y(\tilde{\theta}),\tilde{\theta})
		+\E_F[y(\tilde{\theta})]\E_F[\tilde{\theta}].
		\]

		Differentiating $U(\kappa)$ with respect to $\kappa$ yields
		\[
		\frac{\partial U(\kappa)}{\partial\kappa}
		=-2\kappa\,Var_F(y(\tilde{\theta}))
		+2\rho\!\left[Cov_F(y(\tilde{\theta}),\tilde{\theta})
		-Var_F(y(\tilde{\theta}))\right].
		\]
		This first-order condition reflects the tradeoff between increasing responsiveness to information and amplifying the agent's bias. Furthermore, the second derivative satisfies
		\[
		\frac{\partial^2 U(\kappa)}{\partial\kappa^2}
		=-2Var_F(y(\tilde{\theta}))<0,
		\]
		so the objective is strictly concave. Hence, the unique unconstrained maximizer solves
		\[
		\kappa
		=\rho\!\left(
		\frac{Cov_F(y(\tilde{\theta}),\tilde{\theta})}
		{Var_F(y(\tilde{\theta}))}
		-1\right)
		=-\rho
		\frac{Cov_F(y(\tilde{\theta}),c(\tilde{\theta}))}
		{Var_F(y(\tilde{\theta}))}
		=-\rho\beta.
		\]
		Imposing the constraint $\kappa\in[-\rho,1-\rho]$ yields the stated expression for
		$\kappa^*(\beta)$.
	\end{proof}
	
\subsection*{Transfers}
Parallel to \autoref{sec:transfers}, I allow the principal to commit \textit{ex ante} to a
nonnegative wage schedule $w:\mathbb{R}\to\mathbb{R}_+$ that conditions on the
agent’s action, in addition to selecting the agent’s beliefs. I study how the
availability of such action-contingent transfers interacts with belief design
under a truth-or-noise information structure.

	\begin{proposition}
		\label{prop:transfers-truth}
		Under the truth-or-noise information structure, suppose that
		\[
		\E_F\!\bigl[c(\tilde{\theta})\bigr]^2
		\le
		Var_F\!\bigl(y(\tilde{\theta})\bigr).
		\]
		Then, the availability of action-contingent transfers does not affect the
		principal’s optimal choice of confidence (so that \autoref{prop:truth-noise}
		continues to apply). Moreover, in the optimum, transfers are constant across
		equilibrium actions.
	\end{proposition}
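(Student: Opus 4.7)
The plan is to adapt the approach from the proof of \autoref{prop:transfers} to the continuous-signal truth-or-noise environment. I formulate the principal's problem as the joint choice of a level of confidence $\kappa\in[-\rho,1-\rho]$, an action rule $x:\mathbb{R}\to\mathbb{R}$, and a nonnegative wage schedule $w:\mathbb{R}\to\mathbb{R}_+$, subject to the agent's incentive compatibility. By the revelation principle, I restrict attention to direct mechanisms in which the agent truthfully reports $s$, receives action $x(s)$ and transfer $w(s)$, while all off-path actions pay zero.

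The first step is to characterize incentive compatibility. Because the agent can always deviate to any unrecommended action at no transfer, his best such deviation approaches $\mu_\kappa(s):=(\rho+\kappa)y(s)+(1-\rho-\kappa)\E_F[y(\tilde\theta)]$ and yields continuation payoff $-Var_\kappa(y(\tilde\theta)\mid s)$. This produces the off-path constraint $w(s)\ge(x(s)-\mu_\kappa(s))^2$ for every $s$. Local on-path IC yields the envelope identity $w'(s)=2(x(s)-\mu_\kappa(s))\,x'(s)$, with monotonicity $x'(s)\ge 0$ as the associated second-order condition. Integrating the envelope pins down $w$ on the equilibrium path up to a single nonnegative constant of integration.

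Next, I optimize in $x$ for fixed $\kappa$ and then in $\kappa$. Substituting the off-path bound and the envelope into the principal's objective gives a reduced functional of $(\kappa,x)$. A natural benchmark is the no-distortion candidate $x(s)=\mu_\kappa(s)$, $w(s)\equiv \bar w$: the envelope then gives $w'(s)\equiv 0$, so the equilibrium wage is automatically constant, and the off-path constraint reduces to $\bar w\ge 0$. Optimizing $\kappa$ within this family recovers exactly the unconstrained problem of \autoref{prop:truth-noise}. The remaining task is to show that under $\E_F[c(\tilde\theta)]^2\le Var_F(y(\tilde\theta))$, no IC-compatible distortion of $x$ away from $\mu_\kappa(\cdot)$ strictly improves the principal's payoff at the resulting $\kappa^{*}$. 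Writing $c_\kappa(s):=\mu_\kappa(s)-\E_F[\tilde\theta\mid s]$, a pointwise comparison suggests the ``Case~2'' analog $x^{*}(s)=\tfrac12(\E_F[\tilde\theta\mid s]+\mu_\kappa(s))$, but implementing it via the envelope generically requires a non-flat wage; the step I need is to check that the cost of this non-flatness is exactly what the stated inequality rules out.

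The main obstacle is this final domination step. Unlike the binary case, where the envelope collapses to finitely many on-path constraints and the optimum is computed in closed form case by case, here any distortion $x\neq\mu_\kappa$ forces $w$ to vary along the equilibrium path through the envelope, and the benefit from moving $x$ toward $\E_F[\tilde\theta\mid\cdot]$ must be balanced against the induced wage cost. The technical core I expect is a quadratic calculation, analogous to Step~4 of the proof of \autoref{prop:transfers}, showing that this tradeoff favors the no-distortion candidate precisely when $\E_F[c(\tilde\theta)]^2\le Var_F(y(\tilde\theta))$, which plays the role of the Case~2 restriction $|\E_f[c(\tilde\theta)]|\le \E_f[\tilde\theta\mid s_2]-\E_f[\tilde\theta\mid s_1]$ from the binary analysis. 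Once this is established, \autoref{prop:truth-noise} pins down the optimal $\kappa^{*}$, and the envelope identity delivers flat equilibrium wages, completing the argument.
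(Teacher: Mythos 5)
Your setup (restriction to zero off‑path wages, the off‑path constraint $w(x(s))\ge (x(s)-\mu_\kappa(s))^2$, and the observation that the truth‑or‑noise structure makes $Var_\kappa(y(\tilde\theta)\mid s)$ constant in $s$) matches the paper's Steps 0--2. But the proof stops at the decisive point: you write that the ``technical core I expect'' is a quadratic calculation showing that no IC‑compatible distortion of $x$ away from $\mu_\kappa(\cdot)$ improves on the no‑distortion candidate, and you never carry it out. Worse, that calculation, if done, would refute your candidate rather than confirm it. The optimum described by the proposition is \emph{not} $x(s)=\mu_\kappa(s)$ with a flat (hence zero) wage. Starting from that candidate, the principal can shift every recommended action by a constant $d$ and pay the flat wage $d^2$ demanded by the binding off‑path constraint; the net change in her payoff is $-2d\,\E_F[c(\tilde\theta)]-2d^2$, which is strictly positive at $d=-\E_F[c(\tilde\theta)]/2$ whenever $\E_F[c(\tilde\theta)]\neq 0$. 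So distortions away from $\mu_\kappa$ do pay, and the ``domination step'' you defer is false as stated.

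The correct structure, which is the one the paper follows, is different: the binding off‑path constraint gives $w(x(s))=(x(s)-\mu(s))^2$ pointwise, the principal's pointwise optimum in $x$ is the midpoint $x(s)=\tfrac12\bigl(\E_F[\tilde\theta\mid s]+\mu(s)\bigr)$, and beliefs are then chosen to make the wedge $x(s)-\mu(s)$ \emph{constant} (equal to $-\E_F[c(\tilde\theta)]/2$), not zero. Flatness of wages comes from constancy of the wedge, and the resulting wage is $\E_F[c(\tilde\theta)]^2/4>0$ in general. The reason the optimal $\kappa$ is unchanged is a separation argument: the constant shift $d$ affects only the average bias, while $\kappa$ governs only how $\mu_\kappa(s)$ varies with $s$, so the two choices decouple; the hypothesis $\E_F[c(\tilde\theta)]^2\le Var_F(y(\tilde\theta))$ is what guarantees this configuration is feasible (the analog of the binary Case~2 restriction). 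Your envelope identity $w'(s)=2(x(s)-\mu_\kappa(s))x'(s)$ is also redundant here and potentially in tension with the pointwise binding off‑path constraint; the paper avoids it entirely by working with the off‑path constraint alone, which is possible precisely because the conditional variance is signal‑independent under truth‑or‑noise.
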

	
	The conclusion mirrors that of \autoref{sec:transfers}: even when transfers are available, optimal belief design continues to govern how the agent’s actions respond to information, while transfers serve only to correct the average bias.
	
\begin{proof}[Proof of \autoref{prop:transfers-truth}]
	Fix a level of confidence $\kappa\in[-\rho,1-\rho]$. The principal chooses an
	action recommendation rule $x:\mathbb{R}\to\mathbb{R}$ and a nonnegative wage
	schedule $w:\mathbb{R}\to\mathbb{R}_+$. After observing $s$, the agent selects
	$x\in\mathbb{R}$ to maximize
	\[
	-\E_\kappa\!\left[(x-y(\tilde{\theta}))^2\mid s\right]+w(x).
	\]
	
	Let $\mu(s):=\E_\kappa[y(\tilde{\theta})\mid s]$ denote the agent’s conditional
	preferred action. Under the truth-or-noise information structure,
	\[
	\mu(s)=(\rho+\kappa)y(s)+(1-\rho-\kappa)\E_F[y(\tilde{\theta})].
	\]
	Moreover, conditional on $s$, with probability $\rho+\kappa$ we have
	$\tilde{\theta}=s$, while with probability $1-\rho-\kappa$ the state is an
	independent draw from $F_\Theta$. Hence,
	\begin{equation}
		\label{eq:var-truth}
		Var_\kappa\!\bigl(y(\tilde{\theta})\mid s\bigr)
		=
		(1-\rho-\kappa)\,Var_F\!\bigl(y(\tilde{\theta})\bigr),
	\end{equation}
	which is constant across signal realizations.
	
	\medskip
	\noindent\textbf{Step 0 (restriction to off-path zero wages).}
	Let $X:=\{x(s): s\in\mathbb{R}\}$ denote the set of equilibrium (recommended)
	actions. Consider any wage schedule $w$ and define $\hat w$ by
	\[
	\hat w(x)=
	\begin{cases}
		w(x) & \text{if } x\in X,\\
		0 & \text{if } x\notin X.
	\end{cases}
	\]
	Since $\hat w(x)\le w(x)$ for all $x$, replacing $w$ by $\hat w$ weakly increases
	the principal’s payoff. Moreover, for each signal realization $s$, the wages
	associated with on-path actions are unchanged, while every off-path action yields
	weakly lower wages. Therefore, any incentive compatibility constraint that holds
	under $w$ also holds under $\hat w$. Hence, without loss of generality, restrict
	attention to wage schedules satisfying
	\[
	w(x)=0 \qquad \text{for all } x\notin X.
	\]
	
	\medskip
	\noindent\textbf{Step 1 (incentive compatibility constraints).}
	Fix a signal realization $s$. Under the restriction above, deviations fall into
	two classes.
	
	\smallskip
	\noindent (i) \emph{Deviations to other on-path actions.}
	For any $s'\in\mathbb{R}$, deviating to the action $x(s')\in X$ yields wage
	$w(x(s'))$. Incentive compatibility therefore requires that for all $s,s'$,
	\begin{equation}
		\label{eq:ic-onpath}
		-\E_\kappa\!\left[(x(s)-y(\tilde{\theta}))^2\mid s\right]+w(x(s))
		\;\ge\;
		-\E_\kappa\!\left[(x(s')-y(\tilde{\theta}))^2\mid s\right]+w(x(s')).
	\end{equation}
	
	\smallskip
	\noindent (ii) \emph{Deviations to off-path actions.}
	Any $x\notin X$ yields zero wage. Among such deviations, the agent chooses
	$\mu(s)$, which minimizes $\E_\kappa[(x-y(\tilde{\theta}))^2\mid s]$. Thus,
	incentive compatibility requires
	\begin{equation}
		\label{eq:ic-offpath}
		-\E_\kappa\!\left[(x(s)-y(\tilde{\theta}))^2\mid s\right]+w(x(s))
		\;\ge\;
		-Var_\kappa\!\bigl(y(\tilde{\theta})\mid s\bigr).
	\end{equation}
	
	Using
	\[
	\E_\kappa\!\left[(x-y(\tilde{\theta}))^2\mid s\right]
	=
	Var_\kappa\!\bigl(y(\tilde{\theta})\mid s\bigr)+(x-\mu(s))^2,
	\]
	constraint \eqref{eq:ic-offpath} is equivalent to
	\begin{equation}
		\label{eq:ic-offpath2}
		w(x(s)) \;\ge\; (x(s)-\mu(s))^2
		\qquad \text{for all } s.
	\end{equation}
	
	\medskip
	\noindent\textbf{Step 2 (binding of off-path constraints and flatness of transfers).}
	Since the principal’s objective is decreasing in wages, constraint
	\eqref{eq:ic-offpath2} binds pointwise in any optimum:
	\begin{equation}
		\label{eq:w-binding}
		w(x(s))=(x(s)-\mu(s))^2
		\qquad \text{for all } s.
	\end{equation}
	Therefore, transfers are constant across equilibrium actions if and only if
	$x(s)-\mu(s)$ is constant in $s$. In particular, if there exists $d\in\mathbb{R}$
	such that
	\begin{equation}
		\label{eq:constant-wedge}
		x(s)-\mu(s)=d
		\qquad \text{for all } s,
	\end{equation}
	then \eqref{eq:w-binding} implies
	\[
	w(x(s))=d^2
	\qquad \text{for all } s.
	\]
	Define $w^*:=d^2$.
	
	\medskip
	\noindent\textbf{Step 3 (optimal beliefs and separation from transfers).}
	For any fixed recommendation rule $x(\cdot)$, the incentive compatibility
	constraints \eqref{eq:ic-onpath}--\eqref{eq:ic-offpath2} depend on the agent’s
	beliefs only through the induced conditional expectations $\mu(s)$, since
	\eqref{eq:var-truth} implies that
	$Var_\kappa(y(\tilde{\theta})\mid s)$ is constant across signal realizations.
	
	Hence, whenever it is feasible to choose beliefs such that
	\eqref{eq:constant-wedge} holds, the principal optimally does so, eliminating
	signal-dependent distortions exactly as in the baseline model. Under the
	maintained assumption
	\[
	\E_F[c(\tilde{\theta})]^2 \le Var_F\!\bigl(y(\tilde{\theta})\bigr),
	\]
	this feasibility region includes the principal’s optimal action profile.
	Therefore, the optimal choice of confidence coincides with that in
	\autoref{prop:truth-noise}, and \eqref{eq:constant-wedge} holds in equilibrium.
	By Step~2, this implies $w(x)= w^*$ across recommended actions $x\in X$.
\end{proof}

	 \subsection*{Delegation}
As in \autoref{sec:delegation}, I study the conditions under which the principal prefers to delegate the decision to an agent with the optimal level of confidence.
	 
\begin{proposition}
	\label{prop:delegation-truth}
	Under the truth-or-noise information structure, the decision is delegated to the
	optimal agent if and only if
	\[
	\kappa^*
	\;\ge\;
	\frac{\abs{\E_F[c(\tilde{\theta})]}}
	{\sqrt{Var_F(y(\tilde{\theta}))}}
	-
	\rho,
	\]
	where $\kappa^*$ denotes the principal’s optimal choice of confidence.
\end{proposition}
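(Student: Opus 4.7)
My plan is to compare the principal's payoffs under delegation and centralization, and then rearrange using the characterisation of $\kappa^*$ from \autoref{prop:truth-noise}. Under centralization the principal has no information and quadratic loss, so the optimal unconditional action is $\E_F[\tilde{\theta}]$, yielding payoff $-Var_F(\tilde{\theta})$. Under delegation, the expansion derived in the proof of \autoref{prop:truth-noise} gives
\[
U(\kappa) = -(\rho+\kappa)^2\,Var_F\!\bigl(y(\tilde{\theta})\bigr) + 2(\rho+\kappa)\,\rho\,Cov_F\!\bigl(y(\tilde{\theta}),\tilde{\theta}\bigr) - \E_F[c(\tilde{\theta})]^2 - Var_F(\tilde{\theta}).
\]
Delegation is therefore optimal if and only if $U(\kappa^*)\ge -Var_F(\tilde{\theta})$, which after cancelling the common $-Var_F(\tilde{\theta})$ reduces to the quadratic inequality $-(\rho+\kappa^*)^2 V + 2(\rho+\kappa^*)\rho C \ge E^2$, using the shorthand $V:=Var_F(y(\tilde{\theta}))$, $C:=Cov_F(y(\tilde{\theta}),\tilde{\theta})$, and $E:=\E_F[c(\tilde{\theta})]$.

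The central step is to reduce this quadratic to an inequality that involves only $\rho+\kappa^*$ and the quantities that appear on the right-hand side of the proposition. I would do this by substituting the interior first-order condition derived in the proof of \autoref{prop:truth-noise}, namely $\rho+\kappa^* = \rho C/V$. Replacing $\rho C$ by $(\rho+\kappa^*)V$ in the inequality, the left-hand side collapses to $(\rho+\kappa^*)^2 V$, so delegation becomes $(\rho+\kappa^*)^2 V \ge E^2$. Since $y$ strictly increasing forces $C>0$ and hence $\rho+\kappa^*>0$ at any optimum, taking square roots yields $\rho+\kappa^* \ge |E|/\sqrt{V}$, which is precisely the stated condition after rearranging by $-\rho$.

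The main obstacle is the treatment of corner solutions, because the substitution above relies on the interior first-order condition from \autoref{prop:truth-noise}. The lower corner $\kappa^*=-\rho$ is ruled out: $C>0$ implies that the unconstrained maximiser $\rho(C/V-1)$ lies strictly above $-\rho$. The upper corner $\kappa^*=1-\rho$ arises exactly when $\rho C \ge V$, and there the collapse to $(\rho+\kappa^*)^2 V$ no longer applies directly. Reconciling the direct delegation inequality evaluated at $a=\rho+\kappa^*=1$, namely $2\rho C - V \ge E^2$, with the proposition's threshold $|E|/\sqrt{V}$ under the structural inequality $\rho C \ge V$ that characterises the corner is the nontrivial technical step; it requires exploiting the monotonicity of the quadratic $-a^2V+2a\rho C$ on $[0,\rho C/V]$ together with the specific geometry of the constrained optimum, rather than the interior algebraic identity used in the generic case.
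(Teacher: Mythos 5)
Your interior argument is exactly the paper's: delegation beats centralization if and only if
\[
-(\rho+\kappa^*)^2\,Var_F\!\bigl(y(\tilde{\theta})\bigr)+2(\rho+\kappa^*)\rho\,Cov_F\!\bigl(y(\tilde{\theta}),\tilde{\theta}\bigr)\;\ge\;\E_F[c(\tilde{\theta})]^2,
\]
and substituting the first-order condition $\rho+\kappa^*=\rho\,Cov_F(y(\tilde{\theta}),\tilde{\theta})/Var_F(y(\tilde{\theta}))$ collapses the left-hand side to $(\rho+\kappa^*)^2 Var_F(y(\tilde{\theta}))$, which gives the stated threshold after taking square roots. Your expansion of the delegation payoff is correct, and your dismissal of the lower corner via $Cov_F(y(\tilde{\theta}),\tilde{\theta})>0$ is also right. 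Up to that point you have reproduced the paper's proof, which likewise rests on the interior first-order condition.

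The upper-corner step you flag as ``nontrivial'' is, however, not merely nontrivial: it cannot be completed, because the biconditional genuinely fails there. With $a:=\rho+\kappa^*$, $V:=Var_F(y(\tilde{\theta}))$, $C:=Cov_F(y(\tilde{\theta}),\tilde{\theta})$, and $E:=\E_F[c(\tilde{\theta})]$, the exact delegation condition is $E^2\le a(2\rho C-aV)$. At the corner $a=1$, which arises precisely when $\rho C\ge V$, this reads $E^2\le 2\rho C-V$, while the proposition requires $E^2\le V$; since $2\rho C-V\ge V$ with strict inequality whenever $\rho C>V$, any primitives with $V<E^2\le 2\rho C-V$ make delegation strictly optimal while violating the stated threshold (for instance $\tilde{\theta}$ uniform on $[0,1]$, $y(\theta)=0.1\,\theta+0.5$, $\rho=0.9$ gives $V\approx 0.00083$, $E^2=0.0025$, $2\rho C-V\approx 0.0142$). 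So the ``only if'' direction breaks at the corner, and no monotonicity argument on the quadratic can rescue it. The paper's own proof skips this case entirely by substituting the interior first-order condition without comment; the clean statement is that your argument establishes the proposition whenever $\kappa^*$ is interior, i.e.\ whenever $\rho\,Cov_F(y(\tilde{\theta}),\tilde{\theta})<Var_F(y(\tilde{\theta}))$, and that the equivalence should be qualified accordingly at the boundary.
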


Exactly as in \autoref{sec:delegation}, if the agent is on average unbiased (i.e., $\E_f[c(\tilde{\theta})]=0$), he acts identically to the principal under centralization. In this case, the principal is indifferent between delegating to a maximally underconfident agent and centralizing decision-making. By contrast, when $\E_f[c(\tilde{\theta})]\neq 0$,  delegation requires that the optimal agent’s level of confidence be strictly above  $-\rho$ (the confidence of a maximally underconfident agent).

	\begin{proof}[Proof of \autoref{prop:delegation-truth}]
		Under centralization, the principal does not condition on the agent’s signal and
		chooses the action $x=\E_F[\tilde{\theta}]$. Her expected payoff therefore equals $U^C:=-Var_F(\tilde{\theta})$.
			
		Under delegation to an agent with confidence $\kappa$, the agent’s action after
		observing $s$ is
		\[
		x_\kappa(s)
		=
		(\rho+\kappa)y(s)+(1-\rho-\kappa)\E_F[y(\tilde{\theta})].
		\]
		The principal’s expected payoff from delegation is
		\[
		U^{D}(\kappa)
		=
		-\E_F\!\left[(x_\kappa(\tilde{s})-\tilde{\theta})^2\right].
		\]
		
		Expanding the square and using the law of iterated expectations yields
		\[
		U^{D}(\kappa)
		:=
		-\E_F\!\bigl[c(\tilde{\theta})\bigr]^2
		-(1-\rho-\kappa)^2\,Var_F\!\bigl(y(\tilde{\theta})\bigr)
		-\rho(1-\rho)Var_F(\tilde{\theta}),
		\]
		where the last term captures the irreducible uncertainty due to the truth-or-noise
		structure.
		
		Delegation to the optimally chosen agent is optimal if and only if
		\[
		U^{D}(\kappa^*) \ge U^{C}.
		\]
		Substituting the expressions above and rearranging, this condition is equivalent to
		\[
		\E_F\!\bigl[c(\tilde{\theta})\bigr]^2
		\le
		(\rho+\kappa^*)^2\,Var_F\!\bigl(y(\tilde{\theta})\bigr).
		\]
		Rearranging yields the result.
	\end{proof}

\end{document}